\newcommand{\ignore}[1]{}
\definecolor{cb-salmon-pink}{RGB}{255, 182, 119}
\crefname{enumi}{Step}{Steps}
\newcommand{\Tr}{\mathrm{Tr}}
\DeclarePairedDelimiterX{\infdivx}[2]{\lparen}{\rparen}{%
  #1\delimsize\|#2%
}
\title{Nearly Time-Optimal Pure State Tomography\\with Pauli Measurements}
\author{Sabee Grewal\thanks{UT Austin \texttt{sabee@cs.utexas.edu}} \and Meghal Gupta\thanks{UC Berkeley \texttt{meghal@berkeley.edu}} \and William He\thanks{Carnegie Mellon University \texttt{wrhe@cs.cmu.edu}} \and Aniruddha Sen\thanks{UT Austin \texttt{aniruddhasen@utexas.edu}} \and Mihir Singhal\thanks{UC Berkeley \texttt{mihirs@berkeley.edu}}}
\begin{document}
\allowdisplaybreaks
\maketitle
\begin{abstract}
    We give an algorithm for pure state tomography with near-optimal copy and time complexity using only single-qubit measurements. 
    Specifically, given $\widetilde{O}(2^n/\epsilon)$ copies of an unknown $n$-qubit pure state $\ket\psi$, the algorithm performs only \textit{nonadaptive Pauli measurements}, runs in time $\widetilde{O}(2^n/\epsilon)$, and outputs $\ket{\hat{\psi}}$ with fidelity at least $1-\epsilon$ with $\ket{\psi}$ with high probability. 
    This is the first algorithm for pure state tomography that achieves near-optimal running time. 
\end{abstract}

\section{Introduction}

Quantum state tomography is the problem of learning an unknown quantum state from measurement outcomes on independent copies.  In this work, we focus on tomography of \emph{pure} $n$-qubit states.  The learner receives copies of an unknown state $\ket{\psi}\in (\C^2)^{\otimes n}$ and must output an estimate $\ket{\hat\psi}$ with high fidelity, e.g., $1-|\braket{\hat\psi}{\psi}|^2 \le \epsilon$ with probability at least $1-\delta$.\footnote{For pure states, learning with respect to distances such as trace distance, Frobenius distance, and $\chi^2$ distance is equivalent, since each is a monotone function of fidelity.}


If the learner can perform arbitrary measurements, the copy complexity of worst-case pure-state tomography is known to be $\Theta(2^n/\epsilon)$. This rate is achieved by several procedures and is information-theoretically optimal; see, e.g., \cite{hayashi1998asymptotic,haah2016sample,o2016efficient,guctua2020fast,van2023quantum,pelecanos2025mixed,scharnhorst2025optimal,pelecanos2025debiased}. Moreover, it is well known that the measurements do not need to be entangled \emph{across copies} to attain this scaling \cite{voroninski2013quantumtomography,KUENG201788,haah2016sample}. What remains unclear, however, is whether entanglement is required \emph{within each $n$-qubit copy}. 
Such highly entangled measurements are infeasible in practice, and restricting to simpler measurements would make tomography feasible for larger quantum systems.
This motivates the central question of this paper:
\begin{quote}
\emph{Can one achieve the optimal $\Theta(2^n/\epsilon)$ copy complexity using only nonadaptive single-qubit (product basis) measurements?}
\end{quote}
An even more ambitious goal is to achieve optimal pure-state tomography using only Pauli basis measurements (i.e.~measurements diagonalizing operators $\{X,Y,Z\}^{\otimes n}$), a standard and well-studied class of single-qubit measurements. Previously, the strongest guarantee for pure-state tomography using Pauli measurements (and, more generally, any single-qubit measurements) was due to \cite{guctua2020fast}, who achieved copy complexity $\widetilde O(3^n/\epsilon)$.

Their estimator works as follows. First, each copy of the unknown state is measured in a uniformly random Pauli product basis. Each measurement outcome is then converted into a matrix-valued estimate, and these matrices are averaged across samples.\footnote{They then apply a projected least-squares step to enforce positivity and unit trace; if one wishes to output a pure state, one may further post-process by taking the eigenvector corresponding to the largest eigenvalue. Neither of these steps has a substantial effect on the accuracy of the estimate.}  
By construction, this average will equal the true state in expectation. However, the number of samples required for this estimator to concentrate is $\widetilde O(3^n/\epsilon)$.  Moreover, their variance bound is shown to be tight even for pure product states.  
Consequently, any approach that follows this paradigm -- namely, forming an average of (possibly reweighted) matrices derived from Pauli measurement outcomes and arguing concentration via the matrix Bernstein inequality -- cannot asymptotically improve upon the $3^n$ dependence. 


Our main result shows that the $3^n$ scaling is not a fundamental limitation of \emph{all} Pauli measurement schemes.  We give a different (but still simple) learning algorithm that achieves essentially optimal copy complexity while using only Pauli basis measurements.\footnote{The term \emph{Pauli measurement} can refer to two distinct measurement models. In the first, each qubit is measured independently in one of the $X$, $Y$, or $Z$ bases, yielding an $n$-bit outcome; we refer to this as a \emph{Pauli basis measurement}. In the second, the two-outcome projective measurement $\{(I+P)/2,(I-P)/2\}$ associated with an $n$-qubit Pauli operator $P$ is applied, yielding a single outcome in $\{\pm 1\}$; we refer to this as a \emph{Pauli observable measurement}. Under the latter model, it is known that $\Omega(d^2/\eps)$ samples are necessary for pure state tomography \cite{flammia2012compressed, lowe2022lower}.}

\begin{theorem}\label{lem:bound-total-error-intro}
There exists an algorithm that, given copies of an unknown $n$-qubit pure state $\ket{\psi}$, samples $\widetilde O(2^n\log(1/\delta)/\epsilon)$ Pauli product bases, measures one copy of $\ket{\psi}$ in each sampled basis, and outputs an estimate $\ket{\hat{\psi}}$ satisfying $|\langle \hat{\psi}| \psi \rangle|^2 \ge 1 - \epsilon$ with probability at least $1-\delta$.  The algorithm runs in time $\widetilde{O}(2^n/\epsilon)$.
\end{theorem}
In particular, our copy complexity matches (up to polylogarithmic factors) the $\Omega(2^n/\epsilon)$ lower bound that holds even when the learner is allowed \emph{arbitrary} measurements~\cite{scharnhorst2025optimal}.

In addition to achieving near-optimal copy complexity, our algorithm also attains essentially optimal running time. To the best of our knowledge, this is the first algorithm for pure-state tomography with near-optimal running time, addressing a question explicitly raised in~\cite{Grewal2025efficientlearningof}. Before our work, the fastest algorithms required $\widetilde{O}(4^n/\epsilon)$ time~\cite{van2023quantum}, while, among algorithms restricted to Pauli-basis measurements, the best known running time was $\widetilde{O}(8^n/\epsilon)$~\cite{guctua2020fast}. Our results therefore demonstrate that simple, nonadaptive single-qubit measurements suffice to achieve not only near-optimal statistical efficiency, but also near-optimal computational efficiency.

\subsection{Related Work}

\paragraph{Pauli measurement tomography for mixed states.} 
A direct analogue of our work appear in recent works of Acharya, Dharmavarapu, Liu, and Yu~\cite{acharya2025paulinear,acharya2025pauli, yu2020sample} on mixed state tomography using Pauli basis measurements. These works show that (up to polylogarithmic factors) $\Theta(10^n/\eps)$ copies are necessary and sufficient for algorithms using nonadaptive Pauli measurements to perform tomography on general mixed states. 
Interestingly, while nonadaptive Pauli measurements cannot match the copy complexity achieved by general measurement schemes in the mixed-state setting, our work shows that this is not the case for pure state tomography.
Additionally, combining our algorithm with the reduction of \cite{pelecanos2025mixed} yields a sample-optimal (up to polylogarithmic factors) mixed-state tomography algorithm in which all entangling operations are confined to the initial purification step. This is morally similar to settings such as measurement-based quantum computing or magic state distillation, where the more complicated parts of the computation are isolated in an initial preprocessing phase.

A separate line of work studies tomography from \emph{Pauli observable measurements} through the lens of compressed sensing and low-rank matrix recovery, initiated by Gross, Liu, Flammia, Becker, and Eisert~\cite{gross2010compressed} and developed further in, e.g.,~\cite{liu2011universal,flammia2012compressed}.  These works show that a rank-$r$ state in dimension $d$ can be uniquely reconstructed from $\widetilde O(rd)$ randomly chosen Pauli observables when their expectation values are known exactly.  In the tomography setting, however, we only have noisy empirical estimates of these expectation values. Accounting for this statistical noise gives sample complexity bounds on the order of $\widetilde O_\epsilon(r^2 d^2)$ for tomography.
This matches the corresponding lower bound of $\Omega(r^2 d^2/\eps)$ samples for tomography from Pauli observable measurements \cite{flammia2012compressed, lowe2022lower}.



\paragraph{Direct fidelity estimation.} 
We present an algorithm that estimates the Frobenius distance $\norm{\rho - \sigma}_F$ between two (potentially mixed) $n$-qubit quantum states $\rho$ and $\sigma$ using nonadaptive Pauli measurements; this serves as a key subroutine in our pure-state tomography algorithm. 
A closely related problem was studied in the direct fidelity estimation (DFE) procedure of Flammia and Liu~\cite{flammia2011direct}, which applies to the special case where both $\rho$ and $\sigma$ are pure (precisely the regime relevant to our algorithm). 
Their procedure estimates the fidelity between $\rho$ and $\sigma$ to within additive error $\pm \gamma'$, using $\Ot(2^n/\gamma'^2)$ copies. 
Since, for pure states, the fidelity is linearly related to the \emph{squared} Frobenius distance, this yields an estimator for $\norm{\rho-\sigma}_F$ with copy complexity $\Ot(2^n/\gamma'^4)$. 
While this guarantee suffices to obtain a tomography algorithm using $\Ot(2^n/\poly(\eps))$ copies, our stronger Frobenius-distance estimator is required to achieve the optimal dependence on $\eps$.

\paragraph{Quantum state certification.} Related to the task of tomography is the task of \textit{quantum state certification}, in which one is to determine whether an unknown state is close to some hypothesis state given copies of the unknown state. There has been recent work on certifying pure states using single-qubit measurements. See, for example, \cite{huang2025certifying,gupta2025few}. Interestingly, this line of work shows that to avoid exponentially large copy complexities, adaptivity is necessary. This counters our work, which shows that nonadaptive single-qubit measurements are essentially just as powerful as general measurements for the task of pure state tomography.


\section{Technical Overview}
Our algorithm has two components. 
First, we show how to perform pure-state tomography assuming access to an estimator that approximates the Frobenius distance between an unknown state and a candidate state $\sigma$, using only nonadaptive Pauli measurements that do not depend on $\sigma$.
The second component is an implementation of this estimator.

\subsection{Tomography via Frobenius Distance Estimation}
Let $\ket{\psi}$ be a pure $n$-qubit state. Using $\widetilde{O}(2^n/\eps)$ copies, we aim to output $\ket{\hat\psi}$ with
$|\braket{\psi}{\hat\psi}|^2 \ge 1-\eps$.
For any $x\in\{0,1\}^\ell$, let $p_x$ be the probability of obtaining outcome $x$ when measuring the first $\ell$ qubits of $\ket{\psi}$ in the computational basis, and let $\ket{\psi_x}$ be the normalized post-measurement state on the remaining $n-\ell$ qubits conditioned on that outcome.

We reconstruct $\ket{\psi}$ recursively along the binary tree of prefixes. At depth $k$, we maintain estimates $\{\ket{\hat\psi_x} : x\in\{0,1\}^k\}$. Given estimates at depth $k+1$, we will show how to build estimates at depth $k$. The base case is $k=n-1$, where each $\ket{\psi_x}$ is a $1$-qubit state (unique up to global phase). Iterating this process up to $k=0$ will yield $\ket{\hat\psi_\emptyset}\approx \ket{\psi}$.

We now describe how to perform this recursive estimation procedure. Fix $x\in\{0,1\}^k$. Given $\ket{\hat\psi_{x0}}\approx \ket{\psi_{x0}}$ and $\ket{\hat\psi_{x1}}\approx \ket{\psi_{x1}}$, we seek coefficients $\hat\alpha_{x0},\hat\alpha_{x1}$ such that
\[
\ket{\hat\psi_x} \coloneq  \hat\alpha_{x0}\ket{0}\ket{\hat\psi_{x0}} + \hat\alpha_{x1}\ket{1}\ket{\hat\psi_{x1}}
\approx \ket{\psi_x}.
\]
By \Cref{lem:optimal coeffs}, there exist coefficients achieving an error (in Frobenius distance) comparable to the weighted errors of the two children, so it suffices to (approximately) solve this 2-parameter optimization problem. The objective function has a nice closed-form expression as a two-variable trigonometric function, and we essentially need to implement an oracle for evaluating the objective function. We do so with samples from $\ket{\psi_x}$, which we obtain by measuring the first $k$ qubits of $\ket{\psi}$ in the computational basis.


Implementing our low-dimensional optimization problem reduces to estimating the distance between the unknown state $\ket{\psi_x}\bra{\psi_x}$ and the known candidate $\sigma$. This is exactly where the Frobenius-distance estimator from \Cref{sec:frob-overview} is used: it provides, for any fixed $\sigma$, an estimate of $\|\ket{\psi_x}\bra{\psi_x}-\sigma\|_F$ from appropriate measurements on copies of $\ket{\psi_x}$.

The only subtlety is that the measurements on the last $n-k$ qubits must be fixed \emph{before} we learn $x$: we cannot choose the measurement basis adaptively as a function of the observed prefix. To handle this, fix a probability scale $p$ and consider all prefixes with $p_x\approx p$. It suffices to choose a \emph{global} list of $m=\widetilde{O}(2^n p/\eps)$ measurement settings (independent of $x$) that would let the Frobenius estimator compare $\ket{\psi_x}$ to any fixed $\sigma$, which is exactly what \Cref{sec:frob-overview} provides; we then repeat each setting $\Theta((1/p)\poly(n))$ times, always measuring the first $k$ qubits in the computational basis and binning outcomes by the observed $x$. For any $x$ with $p_x\approx p$, with high probability each setting in the global list is applied at least once within the bin for $x$, so we obtain exactly the measurement data needed to run the distance estimator for that $x$. Finally, since the probabilities $p_x$ vary with $x$, we run the above procedure for $p\in\{2^{-n},2^{-(n-1)},\dots,1\}$; every $p_x$ lies within a constant factor of some dyadic $p$, so the appropriate run handles it.


\subsection{Frobenius Distance Estimation}\label{sec:frob-overview}
Suppose that $\rho$ and $\sigma$ are (potentially mixed) $n$-qubit states. We are given access to copies of $\rho$ and to a full classical description of $\sigma$. Our goal is to estimate the quantity $\norm{\rho-\sigma}_F$ up to additive error $\gamma$, using a fixed (nonadaptive) list of measurements that does not depend on either $\rho$ or $\sigma$, and using $O(d/\gamma^2)$ samples.


In the description of our algorithm, we will assume access to samples of both $\rho$ and $\sigma$, rather than a full description of $\sigma$. This is more general, because a classical description of $\sigma$ allows us to simulate measurements on it. 
The first step is to note that $\norm{\rho-\sigma}_F = 2\sqrt{d}\sqrt{\Ex_P[v_P^2]}$, where $v$ is a real vector of length $4^n$, indexed by $n$-qubit Pauli matrices, defined by $v_P=\frac12\Tr(P(\rho-\sigma))$ (so that we always have $|v_P|\leq 1$). 

By measuring $\rho$ and $\sigma$ with respect to the observable $P$, for any fixed $P$ we can draw samples from a Rademacher random variable with mean $v_P$ using just one copy each of $\rho$ and $\sigma$. Therefore, estimating the Frobenius distance using nonadaptive Pauli measurements reduces to the following problem:
estimate $\sqrt{r}\coloneq\sqrt{\Ex_k[v_k^2]}$ to additive error $\alpha$ for vectors $v\in[-1,1]^{N}$, given the ability to specify a multiset $\{k_1,\dots,k_{T}\}$ and receive Rademacher samples with means $v_{k_1},\dots,v_{k_T}$. We need the number of samples $T$ to be $\Ot(1/\al^2)$.

To motivate the algorithm, let us imagine that all entries of $v$ are either \textit{small} (close to $0$) or \textit{big} (have absolute value close to $1$). Write $r=r_{\text{small}}+r_{\text{big}}$, where $r_{\text{small}}$ and $r_{\text{big}}$ are the contributions to $r$ from small and big indices, respectively. 
Our goal is to produce an estimator $\hat{r}$ satisfying $|\hat{r}-r| \lesssim \alpha\sqrt{r} + \alpha^2$, since such an estimate can be converted into an estimator for $\sqrt{r}$ with additive $\alpha$ error. 
At a constant factor loss, it suffices to estimate $\hat{r}_{\text{small}}$ and $\hat{r}_{\text{big}}$ separately with guarantees
\begin{align}
    |\hat{r}_{\text{small}}-r_{\text{small}}| \lesssim \alpha\sqrt{r_{\text{small}}} + \alpha^2 ,\quad |\hat{r}_{\text{big}}-r_{\text{big}}| \lesssim \alpha\sqrt{r_{\text{big}}} + \alpha^2.\label{eq:big small approximations}
\end{align}
\begin{enumerate}
    \item To estimate $r_{\text{small}}$, we sample a small number of indices $k_1,\dots,k_{T_{\text{small}}}$, and take many samples of each $v_{k_t}$ to get accurate estimates of their values, throwing out the values that are too large. We then output the average of the remaining values $v_{k_i}^2$. 
    If $r_{\text{small}}$ is nonnegligible, there must be many small indices, so sampling only $T_{\text{small}}$ indices suffices to hit enough of them.    
    On the other hand, many samples per index are needed to estimate their values accurately enough to satisfy \eqref{eq:big small approximations}. 
    \item To estimate $r_{\text{big}}$, we instead sample a large number $T_\text{big}$ of indices $k_1,\dots,k_{T_{\text{big}}}$, take a small number of samples of each $v_{k_t}$, discard indices whose empirical averages are too small, and output the average of the corresponding values $v_{k_t}^2$.
    Here, many indices are required to ensure that large entries are encountered if they contribute significantly to $r$, but only a few samples per index suffice, since we need only a coarse estimate of $v_{k_t}$ when $v_{k_t}^2$ is large. 
\end{enumerate}
Thresholding based on the empirical values of $v_{k_t}$ introduces two issues. 
First, the resulting estimate of $v_k$ is no longer unbiased when conditioned on an index being classified as small or big.
Second, for indices whose true values lie near the threshold, the thresholding rule may assign the index to neither or both categories with nonzero probability. Consequently, the probabilities with which an index contributes to the small and big parts may not sum to one, leading to under- or over-counting in expectation.
To fix both of these issues at once, before estimating $v_{k_t}$, we run a fixed process that uses the Rademacher samples to classify the index as small or big, so that it only contributes to one of the two categories, and then independently compute the estimate of $v_{k_t}$.

To extend our algorithm to a general algorithm, rather than splitting indices into only two categories, we partition them into $\log(1/\alpha)$ level sets, where the $j$th level consists roughly of indices $k$ with $\abs{v_k} \in [2^{-j},2^{-j+1}]$). 
It turns out that to achieve the desired approximation guarantee, we can take $T_j \approx \al^{-2}/4^j$ indices in level $j$, and take $m_j \approx 4^j$ samples of each index. 
Summing over all levels yields a total sample complexity of $\widetilde{O}(1/\al^2)$ as desired.

\paragraph{Time Efficiency.} Implementing the above algorithm requires one to output Rademacher random variables with expectation $\Tr(\sigma P)$ given a description of $\sigma$ and $P$. Naively, this requires time $2^n$ when $\sigma=\ketbra{\psi}{\psi}$ and we have a classical description of $\ket{\psi}$ (as a list of amplitudes). Since we have to repeat this for roughly $2^n$ different Pauli matrices $P$, this gives a runtime of $4^n$, whereas we desire a runtime that is $\widetilde{O}(2^n)$ (ignoring $\epsilon$ and $\delta$). To perform this Rademacher sampling in near-linear time, we observe the following: Pauli matrices $P$ are phased permutation matrices in the computational basis. If $\ket{\psi}$ has real amplitudes in the computational basis, then $\bra{\psi}P\ket{\psi}$ is simply a weighted sum of values in $\{\pm1,\pm\ii\}$. This observation allows us to use the following algorithm: draw $x\in\{0,1\}^n$ with probability $|\braket{x}{\psi}|^2$ and compute the entry of $P$ that is nonzero in the $x$th column (which can be computed in time $O(n)$ given the description of $P$ as a Pauli string). Some straightforward reweighting and signing gives a value $v$ with which conditioned on $x$ one should output a Rademacher with expectation $v$. The case where $\ket{\psi}$ has phases follows easily.

To draw $x$ with probability $|\braket{x}{\psi}|^2$ we use the alias method (\cite{walker1977efficient}) to construct a data structure in time $\widetilde{O}(2^n)$ at the beginning of the distance estimator that can produce such samples in time $O(1)$.

\section{Preliminaries}

\subsection{Notation}

For mixed states $\rho$ and $\sigma$, let $\norm{\rho - \sigma}_F$
denote the Frobenius distance between the two states. For pure states $\ket{\psi}$ and $\ket{\phi}$, let
\[
    d_F(\ket{\psi}, \ket{\phi}) \coloneq  \norm{\ketbra{\psi}{\psi} - \ketbra{\phi}{\phi}}_F
\]
be shorthand for the Frobenius distance between two pure states.
The Frobenius distance between $\ket{\psi}$ and $\ket{\phi}$ depends only on their overlap, i.e., $d_F(\ket{\psi},\ket{\phi}) = \sqrt{2(1-|\braket{\psi}{\phi}|^2)}$.

\subsection{Concentration Bounds}
We will use the following standard concentration bounds.

\begin{lemma}[Hoeffding for ${[-1,1]}$ random variables] \label{lem:chernoff-11}
Let $Y_1,\dots,Y_n$ be independent random variables with $\E[Y_i]=\mu_i$ and $Y_i\in[-1,1]$ almost surely. Let
\[
S \coloneqq \sum_{i=1}^n (Y_i - \mu_i).
\]
Then for every $\gamma\ge 1$,
\[
\Pr[ |S| > \gamma \sqrt{n} ] \le 2 \exp(-\gamma^2/8).
\]
\end{lemma}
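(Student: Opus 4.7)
The statement is a standard Hoeffding inequality for bounded, independent (not necessarily identically distributed) random variables, and the plan is to prove it via the usual moment-generating-function (Chernoff) method. Set $Z_i \coloneq Y_i - \mu_i$. Each $Z_i$ is centered and, since $Y_i, \mu_i \in [-1,1]$, takes values in $[-2,2]$. The goal is to control the MGF of each $Z_i$, then tensorize, then apply Markov's inequality at an optimized parameter.

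The key lemma is Hoeffding's MGF bound: for any zero-mean random variable $W$ supported on $[a,b]$,
\[
\E[e^{tW}] \le \exp\!\left(\frac{t^2(b-a)^2}{8}\right).
\]
Applying this to $Z_i$ with $b-a = 4$ gives $\E[e^{tZ_i}] \le \exp(2t^2)$ for every $t \in \R$. By independence,
\[
\E[e^{tS}] \;=\; \prod_{i=1}^n \E[e^{tZ_i}] \;\le\; \exp(2nt^2).
\]
A standard derivation of Hoeffding's MGF bound (which I would briefly include for completeness) uses convexity of $e^{tw}$ to bound it by the chord through $(a,e^{ta})$ and $(b,e^{tb})$, takes expectations using $\E[W]=0$, and then applies the inequality $p e^{-ts} + (1-p)e^{t(b-a-s)} \le e^{t^2(b-a)^2/8}$ via a one-variable calculus argument.

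To finish, apply Markov's inequality: for any $t>0$,
\[
\Pr[S > \gamma\sqrt n] \;\le\; e^{-t\gamma\sqrt n}\,\E[e^{tS}] \;\le\; \exp\!\left(2nt^2 - t\gamma\sqrt n\right).
\]
The right-hand side is minimized at $t = \gamma/(4\sqrt n)$, which gives
\[
\Pr[S > \gamma\sqrt n] \;\le\; \exp(-\gamma^2/8).
\]
The analogous argument applied to $-S$ (using $t<0$, or equivalently replacing $Y_i$ by $-Y_i$) gives the matching lower-tail bound, and a union bound over the two tails yields $\Pr[|S|>\gamma\sqrt n] \le 2\exp(-\gamma^2/8)$.

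There is no real obstacle here; the only point requiring any care is getting the constant $1/8$ exactly right, which comes from the factor $(b-a)^2/8 = 16/8 = 2$ in Hoeffding's MGF bound combined with the optimization $t = \gamma/(4\sqrt n)$. The hypothesis $\gamma \ge 1$ is not used in the proof (the bound holds for all $\gamma > 0$, and is merely vacuous for small $\gamma$); it is presumably included only because that is the regime in which the lemma is invoked elsewhere in the paper.
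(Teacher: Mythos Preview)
Your proof is correct, and in fact the paper does not supply its own proof of this lemma; it is simply stated as a standard concentration bound and used later. Your Chernoff/MGF argument via Hoeffding's lemma is exactly the standard derivation one would expect.

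One minor remark: you apply Hoeffding's MGF lemma to $Z_i = Y_i - \mu_i \in [-2,2]$ (width $4$), which yields $\E[e^{tZ_i}] \le e^{2t^2}$ and hence the constant $1/8$ in the exponent. The usual form of Hoeffding's lemma applies to any bounded $W\in[a,b]$ (not necessarily centered) and bounds $\E[e^{t(W-\E W)}]$; applying it directly to $Y_i\in[-1,1]$ (width $2$) would give $\E[e^{tZ_i}]\le e^{t^2/2}$ and the sharper tail $\exp(-\gamma^2/2)$. This is not a gap, since the lemma as stated only asks for $1/8$, and your observation that the hypothesis $\gamma\ge 1$ is unused is also correct.
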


\begin{lemma}[Bernstein for ${[0,B]}$ random variables] \label{lem:chernoff-01}
Let $X_1,\dots,X_m$ be independent random variables with $X_i\in[0,B]$ and let
\[
S \coloneqq \sum_{i=1}^m X_i, \qquad \mu \coloneqq \E[S].
\]
Then for every $\gamma\ge 1$,
\[
\Pr[ \lvert S - \mu\rvert > \gamma(\sqrt{B\mu}+B) ]
  \le 2 \exp(-\gamma/2).
\]
\end{lemma}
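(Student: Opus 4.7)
The plan is to reduce the statement to a direct application of the classical Bernstein inequality, using the variance bound that is specifically available for nonnegative bounded random variables. Since each $X_i$ lies in $[0,B]$, we have $\mathrm{Var}(X_i) \le \mathbb{E}[X_i^2] \le B\cdot\mathbb{E}[X_i]$, so the total variance $V = \sum_i \mathrm{Var}(X_i)$ satisfies $V \le B\mu$. I would then invoke the standard Bernstein inequality in the form
\[
\Pr[|S - \mu| > t] \le 2\exp\!\left(-\frac{t^2}{2V + 2Bt/3}\right),
\]
and substitute the variance bound to get the denominator $2B\mu + 2Bt/3$.

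With $t = \gamma(\sqrt{B\mu} + B)$, it then suffices to verify the algebraic inequality $\frac{t^2}{2B\mu + 2Bt/3} \ge \frac{\gamma}{2}$, or equivalently $t^2 \ge \gamma B\mu + \gamma Bt/3$. Expanding, $t^2 = \gamma^2 B\mu + 2\gamma^2 B\sqrt{B\mu} + \gamma^2 B^2$, while $\gamma Bt/3 = \tfrac{\gamma^2}{3} B\sqrt{B\mu} + \tfrac{\gamma^2}{3} B^2$, so the difference is
\[
(\gamma^2 - \gamma)\,B\mu \;+\; \tfrac{5\gamma^2}{3}\, B\sqrt{B\mu} \;+\; \tfrac{2\gamma^2}{3}\, B^2,
\]
and each term is nonnegative for $\gamma \ge 1$ (the first using $\gamma^2 \ge \gamma$, the other two trivially). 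This immediately yields the claimed bound $2\exp(-\gamma/2)$.

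I do not expect any substantive obstacle: the bulk of the work is recalling the correct form of Bernstein and carrying out the bookkeeping. The one slightly nontrivial ingredient is the variance bound $\mathrm{Var}(X_i) \le B\,\mathbb{E}[X_i]$, which is what lets us replace the second-moment term in Bernstein by the mean and thus match the $\sqrt{B\mu}$ scaling appearing in the statement; beyond that, the rest is just a one-line nonnegativity check.
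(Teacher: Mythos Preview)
Your proposal is correct and follows essentially the same approach as the paper: both reduce to the standard Bernstein inequality, use the key variance bound $\mathrm{Var}(X_i)\le \mathbb{E}[X_i^2]\le B\,\mathbb{E}[X_i]$ (valid because $X_i\in[0,B]$) to replace $V$ by $B\mu$, and then verify the same exponent inequality $t^2/(2(B\mu+Bt/3))\ge \gamma/2$ at $t=\gamma(\sqrt{B\mu}+B)$. Your explicit expansion of that inequality is in fact slightly more detailed than the paper's ``We check'' step.
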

\begin{proof}
    If $Y_1,\dots,Y_m$ are
    independent, mean-zero, and satisfy $|Y_i|\le B$, and if
    $V\coloneqq \sum_{i=1}^m \E[Y_i^2]$, then for all $t\ge 0$, then by Bernstein's inequality,
    \begin{align}\label{eq:bernstein}
    \Pr\left[\sum_{i=1}^m Y_i \ge t\right]
    \le \exp\left(-\frac{t^2}{2(V+Bt/3)}\right),
    \end{align}
    and the same holds for the lower tail. 
    Now set $Y_i\coloneq X_i-\E[X_i]$, so $|Y_i|\le B$.  Also
    $\E[Y_i^2]\le \E[X_i^2]\le B\E[X_i]$ since $0\le X_i\le B$, hence
    \[
    V=\sum_i \E[Y_i^2]\le B\sum_i \E[X_i]=B\mu.
    \]
    Use \eqref{eq:bernstein} with $V\le B\mu$ and take $t=\gamma(\sqrt{B\mu}+B)$. We check
    \[
    \frac{t^2}{2(B\mu+Bt/3)} \ge \gamma/2 \qquad (\gamma\ge 1),
    \]
    so $\Pr[S-\mu>t]\le e^{-\gamma/2}$ and similarly $\Pr[\mu-S>t]\le e^{-\gamma/2}$. Using a union bound gives
    \[
    \Pr[|S-\mu|> \gamma(\sqrt{B\mu}+B)]\le 2e^{-\gamma/2}.
    \]
\end{proof}



\section{The Algorithm}

It will be useful to describe $\ket{\psi}$ as a binary tree.
Set
\[
  \ket{\psi_\emptyset} \coloneq  \ket{\psi}, \qquad p_\emptyset \coloneq  1.
\]
For any string $x\in\{0,1\}^{< n}$ (identified by a node in the complete binary tree of depth $n$), we write
\begin{align}
  \ket{\psi_x}
  &= \alpha_{x0}\ket{0}\otimes\ket{\psi_{x0}}
   + \alpha_{x1}\ket{1}\otimes\ket{\psi_{x1}},
\end{align}
where $\ket{\psi_{x0}}$ and $\ket{\psi_{x1}}$ are normalized states on the remaining
$n-\len(x)-1$ qubits and $|\alpha_{x0}|^2 + |\alpha_{x1}|^2 = 1$. Recursively, we define the weight at each node $x$ by
\[
  p_x \coloneq  \Pr[\text{measuring the first $\len(x)$ qubits gives $x$}].
\]
Note that this description is not unique, since phases can be absorbed either into the
$\alpha_{xb}$ or into the conditional states $\ket{\psi_x}$, but any such choice suffices
for our analysis.

The main goal of our algorithm is to reconstruct $\ket{\psi}$ in a ``bottom up'' fashion on this tree: we first learn accurate estimates of the leaf states $\ket{\psi_x}$ for $\len(x) = n-1$, then for all internal nodes with $\len(x) = n-2$, and so on up to the root.

\subsection{Gluing Branches and Error Accumulation}
\subsubsection{Optimal Gluings}
The main subroutine we need is the following: given estimates $\ket{\hat\psi_{x0}}$ and $\ket{\hat\psi_{x1}}$, we wish to glue them together to form an estimate $\ket{\hat\psi_{x}}$.
First, we show that if we have good estimates $\ket{\hat\psi_{x0}}$ and $\ket{\hat\psi_{x1}}$, then there exists a way to glue them together that doesn't worsen the error (but not necessarily that we can find it algorithmically). 

\begin{lemma}\label{lem:optimal coeffs}
Assume estimates $\ket{\hat\psi_{x0}}$ and $\ket{\hat\psi_{x1}}$ for $\ket{\psi_{x0}}$ and $\ket{\psi_{x1}}$ satisfy
\[
    d_F(\ket{\hat\psi_{x0}},\ket{\psi_{x0}}) \le \sqrt{a_0}
    \qquad\text{and}\qquad
    d_F(\ket{\hat\psi_{x1}},\ket{\psi_{x1}})\le \sqrt{a_1}.
\]
Then there exist $\hat\alpha_{x0}$ and $\hat\alpha_{x1}$ with
$|\hat\alpha_{x0}|^2+|\hat\alpha_{x1}|^2=1$ such that
\[
    d_F(\hat\alpha_{x0}\ket0\otimes\ket{\hat\psi_{x0}} +\hat\alpha_{x1}\ket1\otimes\ket{\hat\psi_{x1}},\ket{\psi_x})
    \le \sqrt{\frac{p_{x0}}{p_x}\cdot a_0+\frac{p_{x1}}{p_x}\cdot a_1}.
\]
\end{lemma}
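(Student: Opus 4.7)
The plan is to parameterize the problem in terms of fidelities rather than Frobenius distances, observe that the optimal gluing coefficients come from a Cauchy--Schwarz argument, and then translate back.

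First I would recall that for pure states $\ket{u}, \ket{v}$, the Frobenius distance satisfies $d_F(\ket{u},\ket{v})^2 = 2(1 - |\braket{u}{v}|^2)$. So the hypotheses $d_F(\ket{\hat\psi_{xb}},\ket{\psi_{xb}})^2 \le a_b$ translate to $|\braket{\hat\psi_{xb}}{\psi_{xb}}|^2 \ge 1 - a_b/2$. Similarly, the conclusion to prove becomes a lower bound on the fidelity of a suitably chosen glued state with $\ket{\psi_x}$. Also, from the definition of $p_x$ and the decomposition $\ket{\psi_x} = \alpha_{x0}\ket{0}\ket{\psi_{x0}} + \alpha_{x1}\ket{1}\ket{\psi_{x1}}$, one has $|\alpha_{xb}|^2 = p_{xb}/p_x$.

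Next, for any candidate coefficients $(\overline\alpha_{x0},\overline\alpha_{x1})$ on the unit sphere in $\mathbb{C}^2$, the inner product of the glued state with $\ket{\psi_x}$ expands as
\[
\overline{\overline\alpha_{x0}}\,\alpha_{x0}\braket{\hat\psi_{x0}}{\psi_{x0}}
+ \overline{\overline\alpha_{x1}}\,\alpha_{x1}\braket{\hat\psi_{x1}}{\psi_{x1}}
= \overline{\overline\alpha_{x0}}\,\beta_0 + \overline{\overline\alpha_{x1}}\,\beta_1,
\]
where I set $\beta_b \coloneq \alpha_{xb}\braket{\hat\psi_{xb}}{\psi_{xb}}$. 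By Cauchy--Schwarz, the squared magnitude of this expression is at most $|\beta_0|^2 + |\beta_1|^2$, with equality attained by taking $\overline\alpha_{xb} \propto \beta_b$ (normalized to unit length; if both $\beta_b$ vanish, any choice works). This is the step that identifies the optimal gluing: match the phases of the $\overline\alpha_{xb}$ to those of the $\alpha_{xb}\braket{\hat\psi_{xb}}{\psi_{xb}}$, with magnitudes proportional to $|\alpha_{xb}|\cdot|\braket{\hat\psi_{xb}}{\psi_{xb}}|$.

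Finally, plugging in $|\beta_b|^2 = |\alpha_{xb}|^2 \cdot |\braket{\hat\psi_{xb}}{\psi_{xb}}|^2 = (p_{xb}/p_x) \cdot |\braket{\hat\psi_{xb}}{\psi_{xb}}|^2$ and the hypothesis $|\braket{\hat\psi_{xb}}{\psi_{xb}}|^2 \ge 1 - a_b/2$, the fidelity-to-distance conversion yields
\[
d_F(\overline\alpha_{x0}\ket{0}\ket{\hat\psi_{x0}} + \overline\alpha_{x1}\ket{1}\ket{\hat\psi_{x1}},\,\ket{\psi_x})^2
\le 2 - 2\left(\tfrac{p_{x0}}{p_x}(1-a_0/2) + \tfrac{p_{x1}}{p_x}(1-a_1/2)\right)
= \tfrac{p_{x0}}{p_x}a_0 + \tfrac{p_{x1}}{p_x}a_1,
\]
using $p_{x0}+p_{x1}=p_x$ to cancel the constant terms. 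Taking square roots gives the desired bound.

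There is no real obstacle here; the lemma is essentially a convexity-style identity, and the only thing to watch is the bookkeeping of complex phases when invoking Cauchy--Schwarz (which is why the statement only requires \emph{existence} of good coefficients, not a closed-form formula independent of the unknown $\alpha_{xb}$ and $\braket{\hat\psi_{xb}}{\psi_{xb}}$). The algorithmic task of \emph{finding} such coefficients without knowing $\ket{\psi}$ is addressed separately via the net $\mathcal{N}$ in the next subsection.
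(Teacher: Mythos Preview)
Your proof is correct and follows essentially the same approach as the paper: convert the Frobenius distances to fidelities, choose $\overline\alpha_{xb}$ proportional to $\alpha_{xb}\braket{\hat\psi_{xb}}{\psi_{xb}}$ (the paper writes this out explicitly; you recognize it as the Cauchy--Schwarz maximizer), compute the resulting fidelity as the weighted sum $\sum_b (p_{xb}/p_x)|\braket{\hat\psi_{xb}}{\psi_{xb}}|^2$, and convert back. The only cosmetic difference is that the paper uses the normalization $d_F(\ket\phi,\ket\psi)^2 = 1 - |\braket{\phi}{\psi}|^2$ rather than your $2(1-|\braket{\phi}{\psi}|^2)$; since the same convention appears in both hypothesis and conclusion, the factor cancels and the argument goes through identically either way.
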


\begin{proof}
Let $w_b\coloneq p_{xb}/p_x$ for $b\in\{0,1\}$ (so $w_0+w_1=1$), and write
\[
    \ket{\psi_x}=\sqrt{w_0}\ket0\otimes\ket{\psi_{x0}}+\sqrt{w_1}\ket1\otimes\ket{\psi_{x1}}
\]
(absorbing any relative phase into $\ket{\psi_{xb}}$).  Set $c_b\coloneq \braket{\psi_{xb}}{\hat\psi_{xb}}$.
For pure states, $d_F(\ket\phi,\ket\psi)^2=1-|\braket{\phi}{\psi}|^2$, hence
\[
    1-|c_b|^2=\frac{1}{2}d_F(\ket{\hat\psi_{xb}},\ket{\psi_{xb}})^2\le \frac{a_b}{2}.
\]
Let $s\coloneq w_0|c_0|^2+w_1|c_1|^2$. If $s=0$ the claim is trivial; otherwise define
\[
    \hat\alpha_{xb}\coloneq \frac{\sqrt{w_b}c_b^*}{\sqrt{s}}\qquad(b\in\{0,1\}),
\]
so $|\hat\alpha_{x0}|^2+|\hat\alpha_{x1}|^2=1$.  Let
\[
    \ket{\hat\psi_x}\coloneq \hat\alpha_{x0}\ket0\otimes\ket{\hat\psi_{x0}}
    +\hat\alpha_{x1}\ket1\otimes\ket{\hat\psi_{x1}}.
\]
Using $\braket{0}{1}=0$,
\[
    \braket{\psi_x}{\hat\psi_x}
    =\sqrt{w_0}\hat\alpha_{x0}c_0+\sqrt{w_1}\hat\alpha_{x1}c_1
    =\sqrt{s},
\]
so
\[
    d_F(\ket{\hat\psi_x},\ket{\psi_x})^2
    =2(1-|\braket{\psi_x}{\hat\psi_x}|^2)
    =2(1-s)
    =2 \sum_{b\in\{0,1\}} w_b(1-|c_b|^2)
    \le w_0 a_0+w_1 a_1.
\]
Taking square roots gives the stated bound.
\end{proof}


\subsubsection{Algorithmic Gluing}
Next, we design an algorithm to approximate the coefficients $\hat\alpha_{x0}, \hat\alpha_{x1}$ guaranteed by \cref{lem:optimal coeffs}.  
This is the subroutine that calls our Frobenius distance estimator. Define for all $x\in\{0,1\}^{\ell}$ the quantity
\begin{align}
    \eps_x &\coloneq \frac{\epsilon}{2^\ell p_x},
\end{align}
where $p_x$ is the probability of obtaining $x$ after measuring the first $\ell$ qubits in the computational basis. 

\begin{lemma}\label{lem:find coeffs}
There exists an adaptive algorithm $\textsc{Find-Coeffs}(\ket{\hat\psi_{x0}},\ket{\hat\psi_{x1}},\ket{\psi_x},\eps^*)$ that satisfies the following provided $\eps^*\le 0.1\eps_x$.
If
\[
d_F(\ket{\hat\psi_{x0}},\ket{\psi_{x0}})
\le (n-\len(x))\sqrt{\eps_{x0}}
\qquad\text{and}\qquad
d_F(\ket{\hat\psi_{x1}},\ket{\psi_{x1}})
\le (n-\len(x))\sqrt{\eps_{x1}},
\]
then with probability at least $1-\delta$, the algorithm outputs
$\hat\alpha_{x0},\hat\alpha_{x1}\in\mathbb{C}$ satisfying $|\hat\alpha_{x0}|^2+|\hat\alpha_{x1}|^2=1$
and
\[
d_F\left(
\hat\alpha_{x0}\ket0\otimes\ket{\hat\psi_{x0}}
+
\hat\alpha_{x1}\ket1\otimes\ket{\hat\psi_{x1}},
\ket{\psi_x}
\right)
\le
(n-\len(x)+1)\sqrt{\eps_x}.
\]
The algorithm consumes $\widetilde{O}({2^{n-\len(x)}\log(1/\delta)}/{\eps^*})$
copies of $\ket{\psi_x}$ and has matching running time.
\end{lemma}
\begin{proof}
Let $m\coloneq n-\len(x)$.
For angles $\theta\in[0,\pi/2]$ and $\gamma\in\mathbb{R}/2\pi\mathbb{Z}$, define
\[
\ket{\chi(\theta,\gamma)}
\coloneq
\cos\theta\ket0\otimes\ket{\hat\psi_{x0}}
+
e^{i\gamma}\sin\theta\ket1\otimes\ket{\hat\psi_{x1}},
\]
and
\[
\Delta(\theta,\gamma)
\coloneq
d_F\bigl(\ket{\chi(\theta,\gamma)},\ket{\psi_x}\bigr).
\]
Since $\ket0\otimes\ket{\hat\psi_{x0}}$ and $\ket1\otimes\ket{\hat\psi_{x1}}$ are orthonormal, any normalized state in their span can be written (up to a global phase) as 
\[
\cos \theta \ket{0}\otimes \ket{\hat\psi_{x0}} + e^{i \gamma} \sin \theta \ket{1} \otimes \ket{\hat\psi_{x1}}
\]
for some $\theta \in [0,\pi/2]$ and $\gamma \in \R/2\pi\Z$. 
Thus it suffices to minimize $\Delta(\theta,\gamma)$ over $(\theta,\gamma)$.

Write
\[
\ket{\psi_x}
=
\alpha_{x0}\ket0\otimes\ket{\psi_{x0}}
+
\alpha_{x1}\ket1\otimes\ket{\psi_{x1}},
\qquad
|\alpha_{x0}|^2+|\alpha_{x1}|^2=1,
\]
and define
\[
c_0\coloneq \alpha_{x0}\braket{\hat\psi_{x0}}{\psi_{x0}},
\qquad
c_1\coloneq \alpha_{x1}\braket{\hat\psi_{x1}}{\psi_{x1}}.
\]
Then
\[
\braket{\chi(\theta,\gamma)}{\psi_x}
=
c_0\cos\theta+c_1e^{-i\gamma}\sin\theta.
\]
For pure states $\ket\psi$ and $\ket{\phi}$, we have $d_F(\ket{\psi},\ket{\phi})^2 = 2\bigl(1-|\braket{\psi}{\varphi}|^2\bigr)$. Thus we obtain
\begin{align}
\frac12\Delta(\theta,\gamma)^2
&=
1-\left|c_0\cos\theta+c_1e^{-i\gamma}\sin\theta\right|^2 \notag\\
&=
1-|c_0|^2\cos^2\theta-|c_1|^2\sin^2\theta
-2\Re\!\left(c_0\overline{c_1}e^{i\gamma}\right)\sin\theta\cos\theta.
\label{eq:delta-squared-basic}
\end{align}
For fixed $\theta$, the last term is minimized by choosing $\gamma^\star\coloneq -\arg(c_0\overline{c_1})$,
so that $\Re\!\left(c_0\overline{c_1}e^{i\gamma^\star}\right)=|c_0||c_1|.$
Define
\[
f(\theta)\coloneq \inf_{\gamma\in\mathbb{R}/2\pi\mathbb{Z}} \Delta(\theta,\gamma).
\]
Then $f(\theta)=\Delta(\theta,\gamma^\star)$, and \eqref{eq:delta-squared-basic} gives
\begin{align}
\frac12 f(\theta)^2
&=
1-|c_0|^2\cos^2\theta-|c_1|^2\sin^2\theta-2|c_0||c_1|\sin\theta\cos\theta \notag\\
&=
1-\bigl(|c_0|\cos\theta+|c_1|\sin\theta\bigr)^2.
\label{eq:f-squared-first}
\end{align}
Let $s\coloneq \sqrt{|c_0|^2+|c_1|^2}$. If $s=0$, then $f(\theta) = \sqrt2$, so every choice of coefficients is valid, and there is nothing to prove.
Assume henceforth that $s>0$, and define $\theta^\star\in[0,\pi/2]$ by
\[
\cos\theta^\star=\frac{|c_0|}{s},
\qquad
\sin\theta^\star=\frac{|c_1|}{s}.
\]
Then $|c_0|\cos\theta+|c_1|\sin\theta=s\cos(\theta-\theta^\star)$, and therefore \eqref{eq:f-squared-first} becomes
\begin{align}
\frac12 f(\theta)^2
&=
1-s^2\cos^2(\theta-\theta^\star).
\label{eq:f-squared-second}
\end{align}
If $\Delta_\star\coloneq \min_{\theta,\gamma}\Delta(\theta,\gamma)$, then \eqref{eq:f-squared-second} implies $\frac12\Delta_\star^2=1-s^2$, and hence
\begin{align}
\frac12 f(\theta)^2
&=
\frac12\Delta_\star^2+s^2\sin^2(\theta-\theta^\star).
\label{eq:f-squared-final}
\end{align}
In particular, $f$ is unimodal on $[0,\pi/2]$, with minimum value $\Delta_\star$ attained at $\theta^\star$.

By \Cref{lem:optimal coeffs} and the assumptions of the lemma, we have
\[
\Delta_\star
\le
\sqrt{\frac{p_{x0}}{p_x}\cdot m^2\eps_{x0}
+\frac{p_{x1}}{p_x}\cdot m^2\eps_{x1}}.
\]
Since
\[
\frac{p_{x0}}{p_x}\eps_{x0}+\frac{p_{x1}}{p_x}\eps_{x1}
=
\frac{p_{x0}}{p_x}\cdot \frac{\epsilon}{2^{\len(x)+1}p_{x0}}
+
\frac{p_{x1}}{p_x}\cdot \frac{\epsilon}{2^{\len(x)+1}p_{x1}}
=
\frac{\epsilon}{2^{\len(x)}p_x}
=
\eps_x,
\]
it follows that $\Delta_\star\le m\sqrt{\eps_x}$.

We now describe the primitive distance oracle.
Fix
\[
\eta\coloneq c\sqrt{\eps^*},
\]
where $c>0$ is a sufficiently small absolute constant chosen later.
For any queried pair $(\theta,\gamma)$, the state $\ket{\chi(\theta,\gamma)}$ has an explicit classical description on $m$ qubits. By \Cref{thm:main-frobenius-time-efficient}, there exists an algorithm that, with probability at least $1-\rho$, consuming 
\[
\widetilde{O}\!\left(\frac{2^m\log(1/\rho)}{\eta^2}\right)
\]
copies of $\ket{\psi_x}$ and time,
outputs an estimate $\widetilde\Delta(\theta,\gamma)$ satisfying
\[
\left|\widetilde\Delta(\theta,\gamma)-\Delta(\theta,\gamma)\right|\le \eta.
\]
Assume now that we have oracle access to additive $\eta$-approximations of $\Delta(\theta,\gamma)$.
We use this oracle access to implement a one-dimensional noisy search primitive. 

\newcommand{\osc}{\mathrm{osc}}
\begin{definition}
    Let $g:I \to \R$ be a function on an interval $I$. We say that $g$ is \emph{$\phi$-regularized} if for all subintervals $[u,u+w]=I'\subseteq I$,
    \begin{align}
        \osc_{I'}(g) &\leq \phi \cdot \osc_{\{u+w/4,u+w/2,u+3w/4\}}(g),
    \end{align}
    where $\osc_{S}(g)=\sup_{a\in S}g(a)-\inf_{b\in S}g(b)$.
\end{definition}

\begin{restatable}{claim}{noisysearch}
\label{claim:noisy-search}
There is an absolute constant $C$ such that the following holds.
Let $g$ be a continuous unimodal, $16$-regularized, and $10$-Lipschitz function on an interval $I=[L,R]$ of width at most $10$.
Assume oracle access to estimates $\widetilde g(t)$ satisfying
\[
|\widetilde g(t)-g(t)|\le \eta
\]
for every queried $t\in I$.
Then there is an adaptive algorithm making
$O(\log(1/\eta))$
queries and returning a point $\hat t\in I$ such that
\[
g(\hat t)\le \inf_{t\in I} g(t)+C\eta.
\]
\end{restatable}

For each fixed $\theta$, define
\[
g_\theta(\gamma)\coloneq \Delta(\theta,\gamma),
\qquad
f(\theta)\coloneq \inf_{\gamma\in[0,2\pi]} g_\theta(\gamma).
\]
Our goal is to find $(\theta,\gamma)$ minimizing $\Delta(\theta,\gamma)$.
We achieve this via a binary-search-like approach, where we first approximately minimize $g_\theta$ over $\gamma$ for each $\theta$, and then minimize the resulting function $f(\theta)$ over $\theta$.
To apply \cref{claim:noisy-search} to $g_\theta$ and $f$, we need the following conditions:

\begin{restatable}{fact}{gregularized}\label{fact:g-regularized}
For every $\theta \in [0,\pi/2]$, the function $g_\theta(\gamma) \coloneq \Delta(\theta,\gamma)$ is unimodal on every interval of length at most $\pi$, $1$-Lipschitz, and $16$-regularized.
\end{restatable}

\begin{restatable}{fact}{fregularized}\label{fact:f-regularized}
The function $f(\theta)\coloneq \inf_{\gamma\in[0,2\pi]} \Delta(\theta,\gamma)$
is unimodal on $[0,\pi/2]$, $2$-Lipschitz, and $16$-regularized.
\end{restatable}

We prove \cref{claim:noisy-search}, \Cref{fact:g-regularized}, and \cref{fact:f-regularized} in \Cref{sec:deferred}.
Fix $\theta\in[0,\pi/2]$.
By \Cref{fact:g-regularized}, the function $g_\theta$ is unimodal on any interval of length at most $\pi$, and is Lipschitz and regularized.
We therefore apply \Cref{claim:noisy-search} separately on the intervals $[0,\pi]$ and $[\pi,2\pi]$ to the function $g_\theta$, and return the better of the two outputs.
This yields a phase $\hat\gamma(\theta)$ such that
\[
\Delta\bigl(\theta,\hat\gamma(\theta)\bigr)
\le f(\theta) + C\eta.
\]
Querying the oracle once more at $(\theta,\hat\gamma(\theta))$, we obtain an estimate $\widetilde f(\theta)$ satisfying
\[
|\widetilde f(\theta)-f(\theta)|\le C'\eta
\]
using $O(\log(1/\eta))$ queries to the oracle for $\Delta$.
Thus we have constructed an oracle for $f(\theta)$ with additive error $O(\eta)$.

By \Cref{fact:f-regularized}, the function $f$ is unimodal on $[0,\pi/2]$, and is Lipschitz and regularized.
We therefore apply \Cref{claim:noisy-search} to the approximate oracle $\widetilde f$.
This yields $\hat\theta\in[0,\pi/2]$ such that
\[
f(\hat\theta)
\le
\inf_{\theta\in[0,\pi/2]} f(\theta)
+
C''\eta
=
\Delta_\star + C''\eta.
\]
Finally, recall that each evaluation of $\widetilde f(\theta)$ produces a phase $\hat\gamma(\theta)$ satisfying
\[
\Delta(\theta,\hat\gamma(\theta)) \le f(\theta) + C\eta.
\]
Applying this at $\theta = \hat\theta$, we obtain a phase $\hat\gamma := \hat\gamma(\hat\theta)$ such that
\[
\Delta(\hat\theta,\hat\gamma)
\le
f(\hat\theta) + C\eta
\le
\Delta_\star + C'''\eta.
\]
Choosing $c>0$ sufficiently small, and recalling that $\eta=c\sqrt{\eps^*}$ and $\eps^*\le 0.1\eps_x$, we may ensure that $C'''\eta \le \sqrt{\eps_x}$.
It follows that
\[
\Delta(\hat\theta,\hat\gamma)
\le
m\sqrt{\eps_x}+\sqrt{\eps_x}
=
(m+1)\sqrt{\eps_x}.
\]
Set
\[
\hat\alpha_{x0}\coloneq \cos\hat\theta,
\qquad
\hat\alpha_{x1}\coloneq e^{i\hat\gamma}\sin\hat\theta.
\]
Then $|\hat\alpha_{x0}|^2+|\hat\alpha_{x1}|^2=1$ and
\[
d_F\left(
\hat\alpha_{x0}\ket0\otimes\ket{\hat\psi_{x0}}
+
\hat\alpha_{x1}\ket1\otimes\ket{\hat\psi_{x1}},
\ket{\psi_x}
\right)
\le
(m+1)\sqrt{\eps_x}.
\]
The total number of queries to the oracle for $\Delta(\theta,\gamma)$ is $O(\log^2(1/\eta))$.
Setting the failure probability of each call to be
\[
\rho = \Theta \left(\frac{\delta}{\log^2(1/\eta)}\right),
\]
a union bound shows that the overall success probability is at least $1-\delta$.
Since each call uses
\[
\widetilde{O}\!\left(\frac{2^m\log(1/\rho)}{\eta^2}\right)
=
\widetilde{O}\!\left(\frac{2^m\log(1/\delta)}{\eps^*}\right)
\]
copies of $\ket{\psi_x}$ and the same runtime, and the factor $\log^2(1/\eta)$ is absorbed into the $\widetilde{O}(\cdot)$ notation, the total sample complexity and runtime are thus
\[
\widetilde{O}\!\left(\frac{2^{n-\len(x)}\log(1/\delta)}{\eps^*}\right).\qedhere
\]
\end{proof}

\subsection{The Overall Algorithm}\label{sec:overall-algorithm}

Let $C$ be a large enough constant depending on the constant in \Cref{lem:find coeffs}. We will first describe the algorithm to determine the Pauli measurements made, and then describe the algorithm to reconstruct the state.

\begin{algorithm}[H]
    \vspace{0.3em}
    \textbf{Input:} Number of qubits $n$, error $\epsilon$.\\
    \textbf{Output:} Measurement multiset $\mathcal{M}$.
    \begin{algorithmic}[1]
        \State $\mathcal{E} \gets \{\epsilon,2\epsilon,4\epsilon,\dots,1\}$; \quad $\mathcal{M} \gets \emptyset$.
        \For{$\ell = n-1,\dots,0$}
            \State $d_\ell \gets 2^{n-\ell}$.
            \For{$\epsilon' \in \mathcal{E}$}
                \State Let $\mathcal{P}_{\ell,\epsilon'}$ be the (nonadaptive) Pauli queries that \textsc{Find-Coeffs} would use on a $d_\ell$-dimensional state at accuracy $\epsilon'$.
                \For{$P \in \mathcal{P}_{\ell,\epsilon'}$}
                    \State Add $C (\eps'/\eps) 2^\ell n^2$ copies of the pair $(\ell,P)$ to $\mathcal{M}$.
                \EndFor
            \EndFor
        \EndFor
        \State \Return $\mathcal{M}$.
    \end{algorithmic}
    \caption{\textsc{Build-Measurement-Set}$(n,\epsilon)$}
    \label{alg:build-measurement-set}
\end{algorithm}

\begin{algorithm}[ht]
    \vspace{0.3em}
    \textbf{Input:} Copies of $\ket{\psi}$, error $\epsilon$, failure probability $\delta$;\\
    \hspace*{4.5em} measurement multiset $\mathcal{M} = \textsc{Build-Measurement-Set}(n,\epsilon)$;\\
    \hspace*{4.5em} outcomes of measuring $\mathcal{M}$ on independent copies of $\ket{\psi}$.\\
    \textbf{Output:} Estimate $\ket{\hat\psi}$.
    \begin{algorithmic}[1]
        \State $\mathcal{E} \gets \{\epsilon,2\epsilon,4\epsilon,\dots,1\}$.
        \For{$\ell = n-1,\dots,0$}
            \State $d_\ell \gets 2^{n-\ell}$.
            \For{$x\in\{0,1\}^\ell$}
                \For{$\epsilon' \in \mathcal{E}$ in increasing order}
                    \State Let $\mathcal{P}_{\ell,\epsilon'}$ be as in \Cref{alg:build-measurement-set}.
                    \If{for every $P\in\mathcal{P}_{\ell,\epsilon'}$ there is at least 1 outcome labeled $(\ell,x,P,\cdot)$} \label{line:if condition}
                            \State Run \textsc{Find-Coeffs}$(\ket{\hat{\psi}_{x0}},\ket{\hat{\psi}_{x1}},\ket{\psi_x},\epsilon')$ using these outcomes (the first outcome if there are multiple) as answers to its Pauli queries, obtaining $\hat{\alpha}_{x0}$ and $\hat{\alpha}_{x1}$.
                            \State \textbf{break}
                    \Else 
                    \State \Return \textsc{Fail}
                    \EndIf
                \EndFor
                \State $\ket{\hat{\psi}_x} \gets \hat{\alpha}_{x0}\ket0\otimes\ket{\hat{\psi}_{x0}} + \hat{\alpha}_{x1}\ket1\otimes\ket{\hat{\psi}_{x1}}$.
            \EndFor
        \EndFor
        \State \Return $\ket{\hat{\psi}_{\emptyset}}$.
    \end{algorithmic}
    \caption{\textsc{Tomography-From-Measurements}$(\ket{\psi},\epsilon,\mathcal{M})$}
    \label{alg:tomography-from-measurements}
\end{algorithm}

\begin{theorem}\label{lem:bound-total-error}
There exists an algorithm $\textsc{Tomography}(\epsilon,\delta)$ that, given copies of an unknown $n$-qubit pure state $\ket{\psi}$, samples $\widetilde O(2^n\poly(n)\log(1/\delta)/\epsilon)$ Pauli product bases, measures one copy of $\ket{\psi}$ in each sampled basis, and outputs an estimate $\ket{\hat{\psi}}$ satisfying $|\langle \hat{\psi}| \psi \rangle|^2 \ge 1 - \epsilon$ with probability at least $1-\delta$.  The algorithm runs in time $\widetilde{O}(2^n\log(1/\delta)/\eps)$.
\end{theorem}

\begin{proof}
The algorithm will simply be to run \Cref{alg:build-measurement-set} and then run \Cref{alg:tomography-from-measurements} with the output of \Cref{alg:build-measurement-set}. The number of measurements outputted by \Cref{alg:build-measurement-set} is the claimed copy complexity of \Cref{lem:bound-total-error}, so we focus on proving its correctness.

\begin{lemma}[Sufficient samples for a node]\label{lem:enough-samples-per-node}
Fix a level $\ell$ and node $x\in\{0,1\}^\ell$ with $\epsilon_x < 2$. Let $d_\ell=2^{n-\ell}$. With probability at least $1-2^{-\Omega(n)}$, the first \textbf{if} condition in \Cref{alg:tomography-from-measurements} succeeds for $x$ at some $\epsilon'\le0.01\epsilon_x$.
\end{lemma}

\begin{proof}
By definition, 
\[
\epsilon_x = 2^{-\ell}\epsilon/p_x \qquad \iff \qquad 
p_x = 2^{-\ell}\epsilon/\epsilon_x.
\]
Thus when measuring the first $\ell$ qubits of $\ket{\psi}$ in the computational basis, the probability of obtaining prefix $x$ is $p_x \geq 2^{-\ell}\epsilon/\epsilon_x$.

Let $\epsilon'$ be the largest value in $\{\epsilon,2\epsilon,4\epsilon,\dots,1\}$ satisfying $\epsilon'\le0.01\epsilon_x$. Then since $\epsilon_x<2$ it must be that $\epsilon'/\epsilon_x\ge 0.005$. For this $\epsilon'$ and each $P\in\mathcal{P}_{\ell,\epsilon'}$, the measurement multiset contains
\[
N_{\ell,\epsilon'} \coloneq C(\epsilon'/\epsilon)2^\ell n^2
\]
copies of $(\ell,P)$ (see \cref{alg:build-measurement-set}). Across these $N_{\ell,\epsilon'}$ trials, the number of times we obtain prefix $x$ is $\mathrm{Binomial}(N_{\ell,\epsilon'},p_x)$ with expectation
\[
\Ex[\#\text{hits of }x]
= N_{\ell,\epsilon'} p_x
\ge
C(\epsilon'/\epsilon)2^\ell n^2 \cdot 2^{-\ell}\epsilon/\epsilon_x
= C(\epsilon'/\epsilon_x)n^2
\ge Cn^{2}.
\]
Therefore,
\[
\Pr[\text{no outcome labeled }(\ell,x,P,\cdot)]
= (1-p_x)^{N_{\ell,\epsilon'}}
\le \exp(-p_x N_{\ell,\epsilon'})
\le \exp(-\Omega(n^{2})).
\]
A union bound over all $P\in\mathcal{P}_{\ell,\epsilon'}$ shows that with high probability every $P\in\mathcal{P}_{\ell,\epsilon'}$ has at least one recorded outcome labeled $(\ell,x,P,\cdot)$, so the \textbf{if} condition in \Cref{alg:tomography-from-measurements} holds for this $\epsilon'\le0.01\epsilon_x$.
\end{proof}
\begin{lemma}[Node distance invariant]\label{lem:node-fidelity-invariant}
Under the high-probability event of \Cref{lem:enough-samples-per-node}, for every level $\ell$ and node $x\in\{0,1\}^\ell$ with $\epsilon_x=2^{-\ell} \eps / p_x$, the estimate  $\ket{\hat\psi_x}$ satisfies
\[
d_F(\ket{\psi_x},\ket{\hat\psi_x})\leq (n-\ell+1)\sqrt{\epsilon_x},
\]
where $\ket{\psi_x}$ is the true normalized conditional state.
\end{lemma}

\begin{proof}
    We will prove this statement by induction. The base case of $\ell=n$ is trivial, because each $\ket{\psi_x}$ is just the 0 qubit state. The high probability event of \Cref{lem:enough-samples-per-node} guarantees that the conclusion of \Cref{lem:find coeffs} applies, completing the proof.
\end{proof}

\Cref{lem:node-fidelity-invariant} finishes the proof of \Cref{lem:bound-total-error}. 
In particular for the root $x=\emptyset$ we have $d_F(\ket{\psi},\ket{\hat\psi}) \leq n\sqrt{\eps}$. Therefore, we have

\begin{align*}
&~ d_F(\ket{\psi},\ket{\hat\psi}) = 2\bigl(1-|\braket{\psi}{\hat\psi}|^2\bigr) \\
\implies &~ |\braket{\psi}{\hat\psi}|^2 \ge 1-\frac{n^2\eps}{2}, 
\end{align*}
and using $2\e/n^2$ in place of $\e$ gives the desired result. 
\Cref{alg:build-measurement-set} runs in the desired runtime. The runtime guarantee of \Cref{alg:tomography-from-measurements} follows because for each $x\in\{0,1\}^{\leq n}$ the algorithm uses time $O(2^{n-\mathrm{len}(x)})$ and one call of \textsc{Find-Coeffs} with $\eps^*=\Omega(\eps)$ which takes time $2^{n-\mathrm{len}(x)}\cdot \poly(1/\eps)\cdot\log(1/\delta)$ by \Cref{lem:find coeffs}. Summing over all $x$ gives the result.
\end{proof}

\section{Frobenius Distance Estimation}\label{sec:frob-est}


We present our algorithm for estimating the Frobenius distance between two (potentially mixed) quantum states using nonadaptive Pauli observable measurements. 

\begin{definition}[Nonadaptive Pauli measurement scheme]\label{def:nonadaptive}
A \textit{nonadaptive Pauli measurement scheme} with $M$ measurements is a (possibly randomized) procedure that chooses Pauli observables
\[
P_1,\dots,P_M \in \{I,X,Y,Z\}^{\otimes n}
\]
before any measurement is performed. Given an $n$-qubit quantum state $\rho$, for each $i \in \{1,\dots,M\}$, the scheme measures $P_i$ on an independent copy of $\rho$ and records an outcome $X_i \in \{-1,+1\}$ satisfying
\[
\Ex[X_i] = \operatorname{Tr}(\rho P_i).
\]
\end{definition}

\begin{theorem}[Frobenius distance estimation]\label{thm:main-frobenius}
Let $\rho$ and $\sigma$ be quantum states on $n$ qubits, and let $d = 2^n$. There exists a nonadaptive Pauli measurement scheme (depending only on $d$, $\gamma$, and $\delta$) using $\widetilde{O}(d\log(1/\delta)/\gamma^2)$ measurements on independent copies of $\rho$ and $\sigma$ that outputs an estimate $\hat{D}\in \mathbb{R}$ satisfying
\[
\left| \hat{D} - \left\| \rho - \sigma \right\|_F \right| \le \gamma
\]
with failure probability at most $(\gamma/d)^{10}\delta$.
\end{theorem}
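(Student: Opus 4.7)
The plan follows the two-step outline in \Cref{sec:frob-overview}: first reduce Frobenius-distance estimation to a scalar mean-of-squares problem over the Pauli spectrum, then solve the scalar problem via a level-set decomposition with independent classification and estimation phases. For the reduction, since $\{P/\sqrt d : P \in \{I,X,Y,Z\}^{\otimes n}\}$ is a Hilbert--Schmidt-orthonormal basis of $d\times d$ matrices,
\[
\|\rho-\sigma\|_F^2 \;=\; \frac{1}{d}\sum_{P} \Tr(P(\rho-\sigma))^2 \;=\; 4d\cdot\Ex_P[v_P^2], \qquad v_P \coloneq \tfrac12\Tr(P(\rho-\sigma)) \in [-1,1],
\]
with $P$ uniform over the $d^2$ Paulis. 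For any fixed $P$, measuring $P$ on one fresh copy of each state and halving the difference yields a $\{-1,0,+1\}$-valued sample of mean $v_P$. It thus suffices to solve the abstract problem: given oracle access to bounded samples of mean $v_k$ at each index $k \in [N]$ with $N = d^2$, estimate $\sqrt r \coloneq \sqrt{\Ex_k[v_k^2]}$ to additive error $\alpha \coloneq \gamma/(2\sqrt d)$ with failure probability $\delta' \coloneq (\gamma/d)^{10}\delta$, using $\widetilde O(\alpha^{-2}\log(1/\delta'))$ total queries.

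To solve the abstract problem, I would partition indices into $L = \Theta(\log(1/\alpha))$ magnitude levels, with level $j$ consisting of indices $k$ having $|v_k| \in (2^{-j-1}, 2^{-j}]$; the residual tail $|v_k|\le 2^{-L}$ contributes at most $\alpha^2$ to $r$ and is safely ignored. For each level $j$, draw $T_j = \widetilde\Theta(\alpha^{-2}4^{-j})$ uniformly random indices and collect $m_j = \Theta(4^{j}\log(L/\delta'))$ independent samples per index, yielding $\sum_j T_j m_j = \widetilde O(\alpha^{-2}\log(1/\delta'))$ total queries. For each sampled index $k$, split its $m_j$ samples into two independent halves: the first produces an empirical mean $\tilde v_k$ used to \emph{classify} $k$ into level $j$ iff $|\tilde v_k| \in (2^{-j-1}, 2^{-j}]$, and the second produces an independent estimate $\hat v_k$ used only if the classification succeeded. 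The level estimate is $\hat r_j = T_j^{-1}\sum_{t=1}^{T_j}\hat v_{k_t}^2 \cdot \mathbf{1}[k_t \text{ classified in level } j]$, and I output $\hat D = 2\sqrt d\cdot\sqrt{\max(\sum_j \hat r_j,\,0)}$.

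For the analysis, the per-index variables $X_t = \hat v_{k_t}^2 \cdot \mathbf{1}[k_t \text{ classified in level }j]$ are independent across $t$, lie in $[0, B_j]$ with $B_j = O(4^{-j})$, and have mean $\Ex[X_t] \approx r_j$. Applying Bernstein's inequality (\Cref{lem:chernoff-01}) to $\sum_t X_t$ with $\mu_j = T_j r_j$ yields $|\hat r_j - r_j| \lesssim \log(L/\delta')\cdot(\alpha\sqrt{r_j} + \alpha^2)$ per level with failure probability $\delta'/L$. A union bound over levels combined with Cauchy--Schwarz ($\sum_j\sqrt{r_j}\le\sqrt{Lr}$) gives $|\hat r - r| \le \widetilde O(\alpha\sqrt r + \alpha^2)$, which via $|\sqrt{\hat r}-\sqrt r|\le|\hat r - r|/\max(\sqrt{\hat r},\sqrt r)$ becomes $|\sqrt{\hat r} - \sqrt r| \le O(\alpha)$ after absorbing polylog factors into $\alpha$. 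Rescaling by $2\sqrt d$ yields the theorem.

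The main obstacle is handling the classification step cleanly. Conditioned on classification into level $j$, a naive single-batch estimate $\hat v_k^2$ is not unbiased for $v_k^2$, since $\hat v_k$ and the classification decision share samples and are therefore correlated; splitting the $m_j$ samples into two independent halves removes this dependence. A secondary subtlety is that indices with $|v_k|$ very close to a level boundary $2^{-j}$ may land in an adjacent level, but since the Bernstein bound $B_j$ is of the same order for neighboring levels, such misclassifications cost only a constant factor in the variance bound and are absorbed into the $\widetilde O$. Carefully calibrating the classification margin, the sample-split ratio, and the constants in $(T_j, m_j)$ so that all error sources combine to the claimed $\widetilde O(\alpha\sqrt r + \alpha^2)$ bound on $|\hat r - r|$ is the main technical content of the proof.
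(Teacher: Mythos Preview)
Your high-level plan matches the paper's: reduce to estimating $\sqrt{\E_k[v_k^2]}$, decompose into dyadic levels, and separate classification from estimation using disjoint batches of samples. However, two of your steps do not go through as written, and both are \emph{bias} problems that your analysis treats as variance problems.

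First, the estimator $\hat v_k^2$ is biased for $v_k^2$: for a sample mean of $m_j/2$ bounded samples one has $\E[\hat v_k^2]=v_k^2+\mathrm{Var}(\hat v_k)=v_k^2+\Theta(1/m_j)$. At the coarse levels $j=O(1)$ this additive bias is $\Theta(1/\log(1/\alpha))$, which swamps the target error $O(\alpha\sqrt r)$ whenever $r=\Theta(1)$; no tuning of $(T_j,m_j)$ within a $\widetilde O(1/\alpha^2)$ budget removes it. The paper fixes this by splitting the estimation batch once more into two independent halves with means $\mu^{(1)},\mu^{(2)}$ and using the product $\mu^{(1)}\mu^{(2)}$, which is exactly unbiased for $v_k^2$. (Relatedly, your assertion that $X_t\in[0,O(4^{-j})]$ almost surely is false, since $\hat v_k$ is independent of the classifier and can be $\pm 1$; the paper adds an explicit truncation $\min(U_{j,t},16\cdot 4^{-j})$ before invoking Bernstein and shows separately that truncation shifts the mean by only $O(\alpha^2)$.)

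Second, your per-level classification rule does not form a partition. Because the classifier at level $j$ uses $m_j/2$ samples while the one at level $j{-}1$ uses $m_{j-1}/2=m_j/8$, the acceptance probabilities $p_j(v)=\Pr[|\tilde v|\in(2^{-j-1},2^{-j}]]$ need not satisfy $\sum_j p_j(v)=1$. A Gaussian approximation shows that for $v$ one standard deviation from a boundary $2^{-j}$ one gets $p_{j-1}(v)+p_j(v)\approx 0.85$; since all $v_k$ could adversarially sit there, this yields $|\E[\hat r]-r|=\Theta(r)$ rather than $O(\alpha\sqrt r)$. Your remark that boundary misclassification ``costs only a constant factor in the variance bound'' misdiagnoses the issue. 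The paper's fix is to make classification a single consistent random map $v\mapsto L(v)$: its \textsc{Level-Check} routine runs a sequential test that reuses a common prefix of samples across thresholds $b=0,1,\dots$, so that the events $\{L(v_k)=j\}_j$ are disjoint and exhaustive by construction, giving $\sum_j\Pr[L(v_k)=j]=1$ exactly.
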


The main ingredient in our proof will be the following theorem about learning classical distributions.

\begin{theorem}[Classical Rademacher norm estimation]\label{thm:rademacher-norm}
Let $v = (v_1,\dots,v_N) \in [-1,1]^N$ be an unknown vector.  
Suppose that we may (nonadaptively) pick queries $k_1,\dots,k_M \in \{1,\dots,N\}$ (possibly with repetitions). 
For each $j \in \{1,\dots,M\}$, we then receive a sample of a Rademacher random variable $X_j \in \{-1,+1\}$ with $\Ex[X_j] = v_{k_j}$. Then, it is possible to make $M = O(1/\alpha^2)$ nonadaptive queries and output an estimate $\qhat$ such that
\[
\left| \qhat - \sqrt{\Ex_{k \ot [N]}[v_k^2]} \right| \le \alpha,
\]
with probability at least $2/3$.
\end{theorem}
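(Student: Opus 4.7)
The plan is to follow the strategy from \Cref{sec:frob-overview}. First I would reduce the target: it suffices to estimate $r \coloneq \mathbb{E}_{k \ot [N]}[v_k^2]$ to additive error $O(\alpha\sqrt{r} + \alpha^2)$ with constant probability, since then $\qhat \coloneq \sqrt{\max(\hat{r},0)}$ differs from $\sqrt{r}$ by at most $O(\alpha)$ via the inequality $|\sqrt{a}-\sqrt{b}| \leq |a-b|/(\sqrt{a}+\sqrt{b})$. Next, I would partition $[N]$ into $L = O(\log(1/\alpha))$ magnitude classes: level $j$ for $j \in \{1,\ldots,L-1\}$ consists of indices with $|v_k| \in (2^{-j}, 2^{-j+1}]$, and level $L$ collects the remaining small-magnitude indices; the level-$L$ contribution to $r$ is at most $4^{-L+1} = O(\alpha^2)$, which is absorbed into the target additive error. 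Writing $r = \sum_{j<L} r_j$ with $r_j \coloneq \mathbb{E}_k[v_k^2 \mathbf{1}_{k\in\text{level }j}]$, I would build an estimator $\hat{r} = \sum_{j<L} \hat{r}_j$.

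For each level $j$, I would run an independent batch that nonadaptively samples $T_j = \widetilde{O}(4^{-j}/\alpha^2)$ indices uniformly from $[N]$ and queries each sampled index $m_j = \Theta(4^j)$ times. For each sampled index $k_t$, split its $m_j$ Rademacher samples into two independent halves: use the first half to compute the empirical mean $\bar{X}_t$, and apply a fixed deterministic rule that assigns $k_t$ to a single level based on the magnitude bucket containing $|\bar{X}_t|$; include $k_t$ in $\hat{r}_j$ iff its assigned level equals $j$. Use the second half to form an estimator of $v_{k_t}^2$ independent of the classification, for example the pairwise-product $\hat{w}_t \coloneq \binom{m_j/2}{2}^{-1}\sum_{i<i'} X_{t,i} X_{t,i'}$, which satisfies $\mathbb{E}[\hat{w}_t\mid k_t] = v_{k_t}^2$ and $\mathrm{Var}(\hat{w}_t\mid k_t) = O(v_{k_t}^2/m_j + 1/m_j^2)$. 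Set $\hat{r}_j \coloneq T_j^{-1} \sum_t \hat{w}_t \mathbf{1}_{\text{incl}_t}$; the total number of queries is $\sum_j T_j m_j = \widetilde{O}(L/\alpha^2) = \widetilde{O}(1/\alpha^2)$.

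Bounding the bias and variance of $\hat{r}$ separately, the classification rule, by assigning each sampled index to exactly one level, makes $\sum_j q_j(k) \approx 1$ where $q_j(k) \coloneq \Pr[\text{classified as }j\mid k]$; this gives $\mathbb{E}[\hat{r}] = r$ up to small boundary leakage that is absorbed into the $O(\alpha^2)$ slack. For the variance, since the classifier restricts batch-$j$'s included indices to $|v_{k_t}| \lesssim 2^{-j+O(1)}$, the conditional second moment satisfies $\mathbb{E}[\hat{w}_t^2 \mathbf{1}_{\text{incl}_t}\mid k_t] = O(4^{-j}(v_{k_t}^2 + 4^{-j}))$: the $v_{k_t}^4$ term is at most $O(4^{-j} v_{k_t}^2)$, and the pair-product variance contributes $O(1/m_j^2) = O(16^{-j})$. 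Averaging over $k_t$ and summing across the $L$ levels, with Cauchy--Schwarz applied to the $\alpha\sqrt{r_j}$ contributions, yields $\mathrm{Var}(\hat{r}) = \widetilde{O}(\alpha^2 r + \alpha^4)$; Chebyshev then gives $|\hat{r}-r| \leq O(\alpha\sqrt{r}+\alpha^2)$ with probability at least $2/3$, which propagates through $\qhat$ to the claimed $|\qhat - \sqrt{r}| \le \alpha$ guarantee.

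The main obstacle is the classification step. The naive approach of running independent threshold tests in each batch fails because the assignment probabilities $\{q_j(k)\}_j$ across independent batches do not sum to one, producing uncontrolled bias (and indices whose $|v_k|$ lies near a level boundary can be over- or under-counted by adjacent batches). The fix is twofold: use a classifier that outputs a single deterministic level per sampled index, and boost each $m_j$ by a logarithmic factor so that the empirical mean $|\bar{X}_t|$ concentrates tightly enough to make batch $j$'s classifier almost never assign a level-$\ell$ index with $\ell \gg j$ to level $j$ (specifically, with probability $\alpha^{\Omega(1)}$), so that both bias and variance leakage from cross-level misclassification fall within the allowed error budget.
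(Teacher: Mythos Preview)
Your outline matches the paper's high-level strategy (level decomposition, $T_j\approx \alpha^{-2}4^{-j}$ indices with $m_j\approx 4^j$ samples each, separate classification and unbiased $v_k^2$ estimation), but the classification step as you describe it has a real gap. In batch $j$ your classifier computes $\bar X_t$ from $m_j/2$ samples; this number depends on $j$. Hence the random variable ``level assigned to $k$ in batch $j$'' has a different distribution for each $j$, and there is no reason the inclusion probabilities $q_j^{(j)}(k)$ sum to $1$. Concretely, take $|v_k|=2^{-\ell}(1+c/\sqrt{\log(1/\alpha)})$ for a small constant $c$: in batch $\ell$ the standard deviation of $\bar X_t$ is $\Theta(2^{-\ell}/\sqrt{\log})$, so $q_\ell^{(\ell)}(k)\approx\Phi(c')$ for some constant $c'$; in batch $\ell+1$ the standard deviation halves, so $q_{\ell+1}^{(\ell+1)}(k)\approx 1-\Phi(2c')$; and $\Phi(c')+1-\Phi(2c')$ is bounded away from $1$ by a constant. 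If all coordinates of $v$ sit at such a boundary value the bias $\bigl|\E[\hat r]-r\bigr|=\bigl|\E_k[v_k^2(\sum_j q_j^{(j)}(k)-1)]\bigr|$ is $\Theta(r)$, not $O(\alpha\sqrt r+\alpha^2)$. Your stated fix (``output a single level per sampled index'' and ``boost $m_j$ by a log factor'') only enforces consistency \emph{within} a batch and only kills the $\ell\gg j$ leakage; it does nothing for adjacent-level boundary indices, which is where the bias lives.

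The paper resolves this with a different classifier: in batch $j$ it runs a \emph{sequential} test, checking for each $b=0,1,\dots,j$ whether the empirical mean of the first $4^b n_0$ samples exceeds $2^{-b}$ and stopping at the first success. Crucially, the $b$th check uses the \emph{same} number of samples $4^b n_0$ regardless of which batch you are in, so one can couple all batches to a single random level $L(v_k)$ with $Z_{j,t}=\mathbf 1[L(v_{k_{j,t}})=j]$ and hence $\sum_{j\le J+1}\Pr[L(v_k)=j]=1$ exactly. This is the idea your proposal is missing: the point is not that each sampled index gets a unique label, but that the \emph{distribution} of that label is the same across batches. Once you have this, the rest of your plan (unbiased product estimator, per-level concentration, square-root conversion) goes through essentially as you wrote; the paper additionally truncates $U_{j,t}$ at $16\cdot4^{-j}$ to apply Bernstein rather than Chebyshev, but that is a minor difference.
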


We will start by reducing our main theorem to \Cref{thm:rademacher-norm}.

\begin{proof}[Reduction of \texorpdfstring{\cref{thm:main-frobenius}}{Theorem \ref{thm:main-frobenius}} to \texorpdfstring{\Cref{thm:rademacher-norm}}{Theorem \ref{thm:rademacher-norm}}]
Write $\delta = \rho - \sigma$.  
Expanding in the Pauli basis, $\delta = d^{-1} \sum_{P} \delta_P P$ with $\delta_P = \operatorname{Tr}(\delta P)$, and orthogonality of the Paulis implies
\[
\|\delta\|_F^2 = d^{-1} \sum_{P} \delta_P^2.
\]
Define $v_P = \tfrac{1}{2}\delta_P$.  
Since $\left|\operatorname{Tr}(\rho P)\right| \le 1$ and likewise for $\sigma$, we have $|v_P| \le 1$.  
If we let the expectation $\Ex_P[\cdot]$ be over a uniformly random Pauli label $P$, then
\[
\|\rho - \sigma\|_F^2
= d^{-1} \sum_{P} \delta_P^2
= d^{-1} \sum_{P} (2 v_P)^2
= 4d \Ex_P[v_P^2],
\]
so
\[
\|\rho - \sigma\|_F = 2\sqrt{d}\sqrt{\Ex_P[v_P^2]}.
\]
Note that, for each $P$, we may obtain a Rademacher ($\pm 1$) random variable with expectation $v_P$ using one sample each of $\rho$ and $\sigma$, as follows. Measure $P$ on $\rho, \sigma$ (respectively) to obtain $X, Y \in \{\pm 1\}$. Then return the random variable $Z$, which is $X$ or $-Y$ with probability $1/2$ each.
Note that $\Ex[Z] = \tfrac{1}{2}(\operatorname{Tr}(\rho P) - \operatorname{Tr}(\sigma P)) = v_P$, as desired. Thus, we may indeed obtain Rademacher queries as required in \cref{thm:rademacher-norm}, using one (nonadaptive) Pauli measurement per query.

Thus, by \Cref{thm:rademacher-norm}, we may make $M = O(1/\alpha^2)$ nonadaptive measurements to obtain $\qhat$ such that
\[
\left| \qhat - \sqrt{\Ex_P[v_P^2]} \right| \le \alpha
\]
with probability at least $2/3$. Define $\hat{D} = 2\sqrt{d}\qhat$. Then
\[
\left| \hat{D} - \|\rho - \sigma\|_F \right|
= 2\sqrt{d}\left| \qhat - \sqrt{\Ex_P[v_P^2]} \right|
\le 2\sqrt{d}\alpha.
\]
Choosing $\alpha = \gamma/(2\sqrt{d})$ yields $|\hat{D} - \|\rho - \sigma\|_2| \le \gamma$ with probability at least $2/3$.  

Each classical query uses $O(1)$ Pauli measurements on independent copies of $\rho$ and $\sigma$, so the total number of Pauli measurements is
\[
M = O\bigl(1/\alpha^2\bigr) = O\bigl(d/\gamma^2\bigr).
\]
Standard repetition and taking the median amplifies the success probability to at least $1 - (\gamma/d)^{10}\delta$ at the cost of only polylogarithmic factors in $d$ and $1/\gamma$ and $1/\delta$.
\end{proof}

\subsection{Time Efficiency}
For our tomography algorithm, we only use \Cref{thm:main-frobenius} when $\sigma=\ketbra{\psi}{\psi}$ is a known pure state. 
In this case, we can improve the time efficiency of our Frobenius distance estimator:
\begin{theorem}\label{thm:main-frobenius-time-efficient}
Set $d=2^n$. Let $\rho$ be a mixed quantum state on $n$ qubits. Let $\ket{\psi}$ be a pure quantum state on $n$ qubits. There exists an algorithm that takes as input a description of $\ket{\psi}$ as a list of $2^n$ amplitudes in the computational basis and the output of a nonadaptive Pauli measurement scheme (depending only on $d$, $\gamma$, and $\delta$) using $\widetilde{O}(d\log(1/\delta)/\gamma^2)$ measurements on independent copies of $\rho$ and outputs an estimate $\hat{D}\in \mathbb{R}$ satisfying
\[
\left| \hat{D} - \left\| \rho - \ketbra{\psi}{\psi} \right\|_F \right| \le \gamma
\]
with failure probability at most $(\gamma/d)^{10}\delta$. The algorithm runs in time $\widetilde{O}(d\log(1/\delta)/\gamma^2)$.
\end{theorem}
\begin{proof}
    \Cref{alg:choose-indices} and \Cref{alg:build-estimator} both run in the desired runtime, so it suffices to execute the reduction of \Cref{thm:main-frobenius} to \Cref{thm:rademacher-norm} in the desired runtime. To execute the reduction we must be able to sample for any Pauli a collection of $\widetilde{O}(d\log(1/\delta)/\gamma^2)$ Rademacher random variable $Z$ as in the proof of the reduction in the given runtime. This consists of sampling the corresponding $X$ and $Y$. Sampling $X$ can be done in the time it takes to measure $\rho$ in a Pauli basis, which is $\mathrm{poly}(n)=\widetilde{O}(1)$. \Cref{prop:pauli-sampler} shows how to sample the Rademacher $Y$ with $\Ex[Y] = \braket{\psi}{P|\psi}$ as needed in the needed runtime.
\end{proof}

\begin{proposition}\label{prop:pauli-sampler}
    Let $\ket{\psi}$ be a pure quantum state. There exists a data structure and algorithm that uses space $O(d)$ that given a query in the form of a Pauli matrix $P$, runs in time $O(\log d)$, and outputs a Rademacher variable with expectation $\bra{\psi}P\ket{\psi}$. Given a description of $\ket{\psi}$ as a list of $2^n$ amplitudes in the computational basis, it takes time $\widetilde{O}(d)$ to set up.
\end{proposition}
\begin{proof}
    Using \Cref{thm:alias-method} we build a data structure that takes the desired space, takes time $\widetilde{O}(d)$ to set up, and takes time $O(1)$ to answer queries with an independent sample from the distribution $\calD_{\ket{\psi}}$ on $\{0,1\}^n$ given by $\Pr[x]=|\braket{x}{\psi}|^2$. The data structure is based on the alias method:
    \begin{theorem}[\cite{walker1977efficient}]\label{thm:alias-method}
        Let $p$ be a probability distribution on $d$ elements. There exists a data structure and algorithm that takes $O(d)$ space and time $O(1)$ to answer a query with an independent sample from $p$. Setting up the data structure takes time $\widetilde{O}(d)$ given access to the list of probabilities of $p$. 
    \end{theorem}
    We now describe how to turn an independent sample from $\calD_{\ket{\psi}}$ and a Pauli $P$ into a Rademacher random variable with expectation $\bra{\psi}P\ket{\psi}$. First write
    \[
    P = \omega \bigotimes_{j=1}^n P_j,
    \]
    where each \(P_j\in\{I,X,Y,Z\}\), and \(\omega\in\{\pm 1,\pm i\}\).
    
    Every Pauli string acts on computational basis vectors as a signed permutation. Concretely, there exists a bit string \(a\in\{0,1\}^n\) and a phase function \(\eta:\{0,1\}^n\to\{\pm 1,\pm i\}\) such that for every \(x\in\{0,1\}^n\),
    \[
    P\ket{x} = \eta(x)\ket{x\oplus a}.
    \]
    Here \(a_j=1\) exactly when \(P_j\in\{X,Y\}\), and \(a_j=0\) exactly when \(P_j\in\{I,Z\}\). Thus
    \[
    \bra{\psi}P\ket{\psi}
    =
    \sum_{x\in\{0,1\}^n}\overline{\psi_x}\eta(x)\psi_{x\oplus a}.
    \]
    We split into two cases:
    
    \noindent\textbf{Case 1: \(a=0\).} In this case \(P\) is diagonal in the computational basis, so \(\eta(x)\in\{\pm 1\}\) for all \(x\), and
    \[
    P\ket{x}=\eta(x)\ket{x}.
    \]
    The algorithm is simply:
    \begin{enumerate}
        \item Draw \(x\) from the oracle distribution \(\Pr[x]=|\psi_x|^2\).
        \item Output \(\bm{b}:=\eta(x)\in\{\pm 1\}\).
    \end{enumerate}
    Then
    \[
    \Ex[\bm{b}]
    =
    \sum_x |\psi_x|^2 \eta(x)
    =
    \sum_x \overline{\psi_x}\eta(x)\psi_x
    =
    \bra{\psi}P\ket{\psi}.
    \] 
    \noindent\textbf{Case 2: \(a\neq 0\).}
    Now the map \(x\mapsto x\oplus a\) has no fixed points, so \(\{0,1\}^n\) is partitioned into disjoint pairs
    \[
    \{x,x\oplus a\}.
    \]
    Choose a canonical representative \(r\) for each pair, for example the lexicographically smaller of the two strings. Then
    \[
    \bra{\psi}P\ket{\psi}
    =
    \sum_{r}
    \left(
    \overline{\psi_r}\eta(r)\psi_{r\oplus a}
    +
    \overline{\psi_{r\oplus a}}\eta(r\oplus a)\psi_r
    \right),
    \]
    where the sum runs over all representatives \(r\).
    For each representative \(r\), define
    \[
    m_r := |\psi_r|^2 + |\psi_{r\oplus a}|^2
    \]
    and
    \[
    t_r :=
    \overline{\psi_r}\eta(r)\psi_{r\oplus a}
    +
    \overline{\psi_{r\oplus a}}\eta(r\oplus a)\psi_r.
    \]
    Since \(P\) is Hermitian, the quantity \(t_r\) is real. Also,
    \[
    |t_r|
    \leq
    2|\psi_r||\psi_{r\oplus a}|
    \leq
    |\psi_r|^2+|\psi_{r\oplus a}|^2
    =
    m_r,
    \]
    and hence whenever \(m_r>0\),
    \[
    y_r:=\frac{t_r}{m_r}\in[-1,1].
    \]
    We now describe the algorithm.
    \begin{enumerate}
        \item Draw \(x\) from $\calD_{\ket{\psi}}$.
        \item Compute \(x\oplus a\), and let
        \[
        r := \min_{\mathrm{lex}}\{x,x\oplus a\}.
        \]
        \item Look up the two amplitudes \(\psi_r\) and \(\psi_{r\oplus a}\).
        \item Compute $ m_r = |\psi_r|^2+|\psi_{r\oplus a}|^2$ and $t_r=
        \overline{\psi_r}\eta(r)\psi_{r\oplus a}
        +
        \overline{\psi_{r\oplus a}}\eta(r\oplus a)\psi_r$.
        If \(m_r=0\), then this pair is never sampled, so this case may be assigned arbitrarily. Otherwise set
        \[
        y:=\frac{t_r}{m_r}\in[-1,1].
        \]
        \item Output \(\bm{b}\in\{\pm 1\}\) according to
        \[
        \Pr[\bm{b}=1\mid r]=\frac{1+y}{2},
        \qquad
        \Pr[\bm{b}=-1\mid r]=\frac{1-y}{2}.
        \]
    \end{enumerate}
    Conditioned on the event that the sampled pair is \(r\), we therefore have
    \[
    \Ex[\bm{b}\mid r]=y_r=\frac{t_r}{m_r}.
    \]
    Now the probability that the oracle sample lands in the pair indexed by \(r\) is exactly
    \[
    m_r=|\psi_r|^2+|\psi_{r\oplus a}|^2,
    \]
    since this happens iff the sample is either \(r\) or \(r\oplus a\). Therefore
    \begin{align*}
    \Ex[\bm{b}]
    &=
    \sum_r \Pr[\text{sampled pair }=r]\;\Ex[\bm{b}\mid r] \\
    &=
    \sum_r m_r \cdot \frac{t_r}{m_r} =
    \sum_r t_r =
    \bra{\psi}P\ket{\psi}.
    \end{align*}
    This proves correctness. For runtime, once a single oracle sample \(x\) has been drawn, the remaining work consists of computing \(x\oplus a\), choosing the canonical representative \(r\), evaluating the phase function \(\eta\), reading two amplitudes from the explicit description of \(\ket{\psi}\), and performing \(O(1)\) arithmetic operations on these values. All of this takes \(O(\log d)\) time.
\end{proof}

\subsection{Proof of \texorpdfstring{\Cref{thm:rademacher-norm}}{Theorem \ref{thm:rademacher-norm}}} \label{sec:rademacher-proof}

We start by describing the sampling procedure and the estimator built from the resulting outcomes.
\begin{algorithm}[H]
    \vspace{0.3em}
    \textbf{Input:} Accuracy parameter $\alpha \in (0,1)$; query access to $v \in [-1,1]^N$.\\
    \textbf{Output:} Indices and samples $\{(k_{j,t}, X_{j,t,a})\}$.
    \begin{algorithmic}[1]
        \State Set $J \gets  \log_2(1/\alpha) $, $m_0 \gets  2000 \log(1/\alpha) $.
        \For{$j = 0,\dots,J$}
            \State $T_j \gets  \alpha^{-2}/4^j $, \quad $m_j \gets 4^j m_0$.
            \For{$t = 1,\dots,T_j$}
                \State Sample $k_{j,t} \in [N]$ uniformly.
                \For{$a = 1,\dots,m_j$}
                    \State Query $k_{j,t}$ once to obtain $X_{j,t,a} \in \{-1,+1\}$.
                \EndFor
            \EndFor
        \EndFor
        \State \Return $\{(k_{j,t}, X_{j,t,a}) : j \in \{0,\dots,J\}, t \in [T_j], a \in [m_j]\}$.
    \end{algorithmic}
    \caption{\textsc{Choose-Indices}$(\alpha)$}
    \label{alg:choose-indices}
\end{algorithm}

\vspace{-.16in}
\begin{algorithm}[H]
    \vspace{0.3em}
    \textbf{Input:} Accuracy parameter $\alpha$; samples $\{(k_{j,t}, X_{j,t,a})\}$ from \textsc{Choose-Indices}$(\alpha)$.\\
    \textbf{Output:} Estimate $\qhat$ of $\sqrt{\Ex_{k}[v_k^2]}$.
    \begin{algorithmic}[1]
        \State Set $J \gets  \log_2(1/\alpha) $, $m_0 \gets 1000 \log(1/\alpha) $, $n_0 \gets m_0 / 4$.

        \Statex
        \Function{Est-v-Squared}{$j,t$}
            \State $m_j \gets 4^j m_0$.
            \State $\mu^{(1)} \gets \frac{4}{m_j} \sum_{a=1}^{m_j/4} X_{j,t,a}$.
            \State $\mu^{(2)} \gets \frac{4}{m_j} \sum_{a=m_j/4+1}^{m_j/2} X_{j,t,a}$.
            \State \Return $\mu^{(1)} \mu^{(2)}$.
        \EndFunction

        \Statex
        \Function{Level-Check}{$j,t$}
            \State $m_j \gets 4^j m_0$.
            \For{$b = 0,\dots,j$}
                \State $n \gets 4^b n_0$.
                \State $\overline{X} \gets \frac{1}{n} \sum_{a=m_j/2+1}^{m_j/2+n} X_{j,t,a}$.
                \If{$|\overline{X}| > 2^{-b}$}
                    \State \Return $\mathbf{1}[b = j]$.
                \EndIf
            \EndFor
            \State \Return 0.
        \EndFunction

        \Statex
        \For{$j = 0,\dots,J$}
            \State $T_j \gets  \alpha^{-2}/4^j $.
            \For{$t = 1,\dots,T_j$}
                \State $U_{j,t} \gets \Call{Est-v-Squared}{j,t}$.
                \State $\widetilde U_{j, t} \gets \min(U_{j, t}, 16 \cdot 4^{-j})$
                \State $Z_{j,t} \gets \Call{Level-Check}{j,t}$.
                \State $\rhat_{j,t} \gets \widetilde U_{j,t} Z_{j,t}$.
            \EndFor
            \State $\rhat_j \gets \frac{1}{T_j} \sum_{t=1}^{T_j} \rhat_{j,t}$.
        \EndFor
        \State \Return $\qhat = \sqrt{\sum_{j=0}^J \rhat_j}$.
    \end{algorithmic}
    \caption{\textsc{Build-Estimator}$(\alpha,\{(k_{j,t},X_{j,t,a})\})$}
    \label{alg:build-estimator}
\end{algorithm}

Fix $x \in [-1,1]$.  To define $L(x)$, consider the following infinite version of \textsc{Level-Check} that has direct access to an i.i.d.\ stream of Rademacher random variables of mean $x$. Draw $n_b = 4^J n_0$ samples $Y^{(b)}_1,\dots,Y^{(b)}_{n_b}$ with $\E[Y^{(b)}_i] = x$, and let
\[
\overline{ Y_b} \coloneqq \frac{1}{n_b} \sum_{i=1}^{n_b} Y^{(b)}_i.
\]
If there exists $b \le J$ with $|\overline{Y_b}| > 2^{-b}$, let $L(x)$ be the smallest such $b$; otherwise set $L(x) = J+1$.

By construction, $L(x)$ is the first level at which the empirical mean looks ``large'' relative to the threshold $2^{-b}$.  We couple the randomness in $\textsc{Level-Check}(j,t)$ with the randomness in $L(\cdot)$ so that, conditioned on $k_{j,t}$, the random variable
\[
Z_{j,t} = \textsc{Level-Check}(j,t)
\]
has the same distribution as $\ind{L(v_{k_{j,t}}) = j}$.  In particular, we write
\[
z_{j,t} \coloneqq \E[Z_{j,t} \mid k_{j,t}] = \Pr[L(v_{k_{j,t}}) = j \mid k_{j,t}].
\]
We now prove several concentration lemmas relating $v_k$, $L(v_k)$, and the random variables $U_{j,t}$ and $Z_{j,t}$.

\begin{lemma} \label{lem:level-concentration}
For any fixed $x \in [-1,1]$, with probability $1-O(\alpha^2)$ over the randomness defining $L(x)$, the following holds:
\begin{itemize}
    \item If $L(x) = j \in \{0,\dots,J\}$ then
    \[
        0.9 \cdot 2^{-j} \le \len(x) \le 2.2 \cdot 2^{-j}.
    \]
    \item If $L(x) = J+1$ then
    \[
        \len(x) \le 2.2 \cdot 2^{-J} = O(\alpha).
    \]
\end{itemize}
\end{lemma}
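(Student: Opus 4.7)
The plan is to apply Hoeffding's inequality (\Cref{lem:chernoff-11}) separately at each level $b$ to show the empirical mean $\overline{Y_b}$ is within $0.1 \cdot 2^{-b}$ of the true mean $x$ with high probability, union-bound over the $J+1$ levels, and then translate the definition of $L(x)$ into the stated bounds on $|x|$. I will interpret the $\len(x)$ appearing in the statement as $|x|$, since each $Y^{(b)}_i$ has mean $x \in [-1,1]$ and the algorithm's thresholds are phrased in terms of $|x|$.

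First I would fix $b \in \{0,\dots,J\}$. Since $Y^{(b)}_1,\dots,Y^{(b)}_{n_b}$ are i.i.d.\ in $[-1,1]$ with mean $x$ and $n_b = 4^b n_0$, applying \Cref{lem:chernoff-11} to $\sum_{i=1}^{n_b} (Y^{(b)}_i - x)$ with deviation $0.1 \cdot 2^{-b} n_b = 0.1 \sqrt{n_0}\cdot \sqrt{n_b}$ gives
\[
\Pr\bigl[\,|\overline{Y_b} - x| > 0.1 \cdot 2^{-b}\,\bigr] \le 2\exp(-\Omega(n_0)).
\]
Because $n_0 = \Theta(\log(1/\alpha))$ with a sufficiently large constant, this per-level failure probability is at most $\alpha^{10}$ (or any polynomial in $\alpha$ we need). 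Union bounding over the $J+1 = O(\log(1/\alpha))$ levels, the failure probability is $O(\alpha^2)$ as required. Call $\mathcal{E}$ the event that $|\overline{Y_b} - x| \le 0.1 \cdot 2^{-b}$ holds for all $b \in \{0,\dots,J\}$; I will argue deterministically on $\mathcal{E}$.

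Next I would handle the two cases of $L(x)$ separately. If $L(x) = j \in \{0,\dots,J\}$, then by definition $|\overline{Y_j}| > 2^{-j}$, so on $\mathcal{E}$ we get $|x| \ge |\overline{Y_j}| - 0.1 \cdot 2^{-j} > 0.9 \cdot 2^{-j}$, giving the lower bound. For the upper bound, the case $j = 0$ is immediate from $|x| \le 1 = 2^{0} \le 2.2 \cdot 2^{-j}$; and for $j \ge 1$, minimality of $L(x)$ forces $|\overline{Y_{j-1}}| \le 2^{-(j-1)}$, whence on $\mathcal{E}$,
\[
|x| \le |\overline{Y_{j-1}}| + 0.1 \cdot 2^{-(j-1)} \le 2 \cdot 2^{-j} + 0.2 \cdot 2^{-j} = 2.2 \cdot 2^{-j}.
\]
If instead $L(x) = J+1$, then in particular $|\overline{Y_J}| \le 2^{-J}$, so on $\mathcal{E}$ we get $|x| \le 1.1 \cdot 2^{-J} \le 2.2 \cdot 2^{-J} = O(\alpha)$.

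The argument is essentially bookkeeping on top of standard Hoeffding concentration; there is no substantive obstacle. The only thing to watch is that the slack $0.1 \cdot 2^{-b}$ used in the concentration statement is small enough to produce the constants $0.9$ and $2.2$ in the lemma (the worst case being the upper bound at level $j-1$, where the slack doubles), and that the constant inside $n_0 = \Theta(\log(1/\alpha))$ is large enough to absorb the union bound over $J+1$ levels while leaving the total failure probability at $O(\alpha^2)$.
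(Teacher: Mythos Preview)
Your proposal is correct and follows essentially the same approach as the paper: apply Hoeffding at each level to get $|\overline{Y_b}-x|\le 0.1\cdot 2^{-b}$ with failure probability $O(\alpha^{10})$, union-bound over the $J+1$ levels, and read off the conclusion from the definition of $L(x)$. The paper's proof is terser (it simply asserts the last deduction ``immediately follows''), whereas you spell out the case analysis explicitly; both are fine.
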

\begin{proof}
For each $b = 0,\dots,J$, let $\overline{X}_b$ be the empirical mean of the $4^b n_0$ Rademacher samples (of mean $x$) used at level $b$. By the Hoeffding bound (\Cref{lem:chernoff-11}), we have
\[
\Pr[|\overline{X}_b - x| > 0.1 \cdot 2^{-b}]
    \le O(\alpha^{10}),
\]
using $n_0 = 1000 \log(1/\alpha^2)$.  
By a union bound over $b = 0,\dots,J = O(\log(1/\alpha^2))$, with probability $1-O(\alpha^2)$ we have
\begin{equation} \label{eq:Xb-conc}
|\overline{X}_b - x| \le 0.1 \cdot 2^{-b} \qquad\text{for all } b.
\end{equation}
Assuming \eqref{eq:Xb-conc} holds for every $b$, the conclusion immediately follows by the definition of $L(x)$.
\end{proof}

\begin{lemma} \label{lem:U-concentration}
For each $j \le J$ and $t$, conditioned on $k_{j,t}$, with probability $1-O(\alpha^2)$ we have
\[
|\mu^{(1)} - v_{k_{j,t}}| \le 0.05 \cdot 2^{-j}
\quad\text{and}\quad
|\mu^{(2)} - v_{k_{j,t}}| \le 0.05 \cdot 2^{-j},
\]
where $\mu^{(1)},\mu^{(2)}$ are the quantities computed in \textsc{Est-v-Squared}.
\end{lemma}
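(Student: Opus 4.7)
\medskip

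\noindent\textbf{Proof plan for \Cref{lem:U-concentration}.} The plan is to observe that each of $\mu^{(1)}$ and $\mu^{(2)}$ is, by construction, just an empirical average of $m_j/4$ independent Rademacher samples, each with mean $v_{k_{j,t}}$ (conditional on $k_{j,t}$), and then apply Hoeffding (\Cref{lem:chernoff-11}) to each separately. In particular, the blocks $\{X_{j,t,a}\}_{a=1}^{m_j/4}$ and $\{X_{j,t,a}\}_{a=m_j/4+1}^{m_j/2}$ used in \textsc{Est-v-Squared} are disjoint subsets of the stream of samples drawn by \textsc{Choose-Indices} for this $(j,t)$, so given $k_{j,t}$ they are independent $\pm 1$ random variables of the required mean.

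Concretely, I would write $n \coloneqq m_j/4 = 4^{j-1} m_0$ and $S^{(r)} \coloneqq \sum_{a} (X_{j,t,a} - v_{k_{j,t}})$ for $r \in \{1,2\}$, so that $\mu^{(r)} - v_{k_{j,t}} = S^{(r)}/n$. Applying \Cref{lem:chernoff-11} with the choice $\gamma = 0.05 \cdot 2^{-j} \sqrt{n}$ gives
\[
\Pr\Bigl[\,|\mu^{(r)} - v_{k_{j,t}}| > 0.05 \cdot 2^{-j}\,\Bigr]
 \le 2\exp\bigl(-\gamma^2/8\bigr)
 = 2\exp\bigl(-\tfrac{(0.05)^2}{8}\cdot 2^{-2j} n\bigr).
\]
Substituting $n = 4^{j-1} m_0$ makes the $j$-dependence cancel, yielding a bound of the form $2\exp(-c\, m_0)$ for an absolute constant $c > 0$. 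With $m_0$ a sufficiently large multiple of $\log(1/\alpha)$ (as specified in \textsc{Build-Estimator}), this probability is at most $O(\alpha^2)$. A union bound over $r \in \{1,2\}$ then gives the claim.

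The only mildly delicate point is bookkeeping: I need to verify that the per-sample range, the sample count $n = m_j/4$, and the target deviation $0.05 \cdot 2^{-j}$ conspire so that the exponential bound is independent of $j$ (which is exactly what makes the choice $m_j \propto 4^j$ natural in the first place). Beyond that, there is no real obstacle — this lemma is a one-shot application of Hoeffding's inequality to a fixed pair of empirical means on disjoint sample blocks, and I would present it in a few lines.
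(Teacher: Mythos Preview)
Your proposal is correct and matches the paper's own proof essentially line for line: both condition on $k_{j,t}$, observe that each $\mu^{(r)}$ is an average of $m_j/4 = 4^{j-1} m_0$ i.i.d.\ Rademacher samples with mean $v_{k_{j,t}}$, and apply the Hoeffding bound \Cref{lem:chernoff-11}. Your write-up is in fact more detailed than the paper's (which dispatches the lemma in two sentences), including the explicit verification that the $j$-dependence cancels.
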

\begin{proof}
Condition on $k_{j,t}$, each $\mu^{(\ell)}$ is the average of $m_j/4 = 4^{j-1} m_0$ i.i.d.\ Rademacher variables with mean $v_{k_{j,t}}$ and range in $[-1,1]$. Thus, the conclusion again follows directly from a Hoeffding bound (\cref{lem:chernoff-11}).
\end{proof}

\begin{lemma} \label{lem:truncation}
For each $j \le J$ and $t$, conditioned on $k_{j,t}$, with probability $1-O(\alpha^2)$ we have
\[
U_{j,t} Z_{j,t} = \Ut_{j,t} Z_{j,t}.
\]
\end{lemma}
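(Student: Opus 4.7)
The plan is to split on the value of $Z_{j,t}$. If $Z_{j,t}=0$ then trivially $U_{j,t}Z_{j,t}=\widetilde U_{j,t}Z_{j,t}=0$, so nothing needs to be shown. The remaining case is $Z_{j,t}=1$, and here it suffices to establish that the truncation is inactive, i.e.\ $U_{j,t}\le 16\cdot 4^{-j}$, so that $\widetilde U_{j,t}=U_{j,t}$.

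Assume $Z_{j,t}=1$. By the coupling used to define $Z_{j,t}$, this means that $L(v_{k_{j,t}})=j$. Now apply \Cref{lem:level-concentration} with $x=v_{k_{j,t}}$: on its good event (probability $1-O(\alpha^2)$ over the samples defining $L$), the conclusion for the case $L(x)=j$ gives
\[
|v_{k_{j,t}}|\le 2.2\cdot 2^{-j}.
\]
Next apply \Cref{lem:U-concentration}, whose good event also has probability $1-O(\alpha^2)$: conditioned on $k_{j,t}$, both empirical half-averages $\mu^{(1)},\mu^{(2)}$ used in \textsc{Est-v-Squared} satisfy
\[
|\mu^{(\ell)}-v_{k_{j,t}}|\le 0.05\cdot 2^{-j} \qquad(\ell=1,2).
\]
By the triangle inequality, $|\mu^{(\ell)}|\le 2.25\cdot 2^{-j}$, hence
\[
U_{j,t}=\mu^{(1)}\mu^{(2)} \le |\mu^{(1)}|\cdot|\mu^{(2)}|\le (2.25)^2\cdot 4^{-j} < 16\cdot 4^{-j}.
\]
Therefore $\widetilde U_{j,t}=\min(U_{j,t},16\cdot 4^{-j})=U_{j,t}$, and the claimed equality holds.

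Finally, a union bound over the two bad events (from \Cref{lem:level-concentration} and \Cref{lem:U-concentration}) gives failure probability at most $O(\alpha^2)$. The only thing to be slightly careful about is that the randomness of \textsc{Level-Check}$(j,t)$ is coupled with the randomness defining $L(v_{k_{j,t}})$ (as already set up before the statement), and that \textsc{Est-v-Squared} uses an independent set of samples (indices $a\le m_j/2$) from those used by \textsc{Level-Check} (indices $a>m_j/2$), so conditioned on $k_{j,t}$ the two concentration events are independent and the union bound is clean. Beyond verifying this independence, the only arithmetic needed is the bound $(2.25)^2<16$, so no serious obstacle arises.
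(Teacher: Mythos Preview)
Your proposal is correct and follows essentially the same argument as the paper: split on $Z_{j,t}$, use the coupling and \Cref{lem:level-concentration} to bound $|v_{k_{j,t}}|$ when $Z_{j,t}=1$, combine with \Cref{lem:U-concentration} via the triangle inequality to get $|\mu^{(\ell)}|\le 2.25\cdot 2^{-j}$, and conclude $|U_{j,t}|<16\cdot 4^{-j}$ so the truncation is inactive. Your explicit remark that \textsc{Est-v-Squared} and \textsc{Level-Check} use disjoint sample blocks is a nice clarification that the paper leaves implicit.
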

\begin{proof}
Fix $j,t$ and condition on $k_{j,t}$, writing $x \coloneq  v_{k_{j,t}}$. If $Z_{j,t} = 0$ then both sides are $0$, so we will study what happens when $Z_{j,t}=1$.  

Recall that by how we defined $L(x)$ (running the same \textsc{Level-Check} procedure to all levels), we can couple so that $Z_{j,t}=1$ if and only if $L(x)=j$, while having failure probability $O(\alpha^2)$.
Therefore, by \cref{lem:level-concentration} applied to $x$, with probability $1-O(\alpha^2)$ we have
\[
Z_{j,t}=1 \implies \len(x) \le 2.2 \cdot 2^{-j}.
\]
By \cref{lem:U-concentration} (for the same $j,t$), with probability $1-O(\alpha^2)$ we have
\[
|\mu^{(1)} - x| \le 0.05 \cdot 2^{-j}
\quad\text{and}\quad
|\mu^{(2)} - x| \le 0.05 \cdot 2^{-j}.
\]
Combining these, we have with probability $1-O(\al^2)$ that
\[
Z_{j,t}=1 \implies  |\mu^{(1)}|,\ |\mu^{(2)}|
    \le 2.25 \cdot 2^{-j},
\]
and therefore
\[
Z_{j,t}=1 \implies  |U_{j,t}|
    < 16 \cdot 4^{-j}.
\]
Thus, except with probability $1-O(\al^2)$, we have either $Z_{j, t} = 0$ or $|U_{j,t}| < 16 \cdot 4^{-j}$ (which implies that $U(j, t) = \Ut(j, t)$), and thus we are done.
\end{proof}

\begin{lemma}
For each $j\leq J$ and $t$, conditioned on $k_{j,t}$ we have
\[
\E[\rhat_{j,t}] = v_{k_{j,t}}^2 \cdot z_{j,t} + O(\alpha^2),
\]
where $z_{j,t} = \Pr[L(v_{k_{j,t}})=j \mid k_{j,t}]$ (as defined above).
\end{lemma}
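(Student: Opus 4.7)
The strategy is to exploit the fact that \textsc{Est-v-Squared} and \textsc{Level-Check} consume disjoint sets of Rademacher samples (indices $1,\dots,m_j/2$ versus $m_j/2+1,\dots,m_j$), so conditioned on $k_{j,t}$, the random variable $U_{j,t}$ is independent of $Z_{j,t}$. Together with the truncation lemma, this reduces the expectation computation to evaluating $\E[U_{j,t}\mid k_{j,t}]$ and $\E[Z_{j,t}\mid k_{j,t}]$ separately.

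First, I would compute $\E[U_{j,t}\mid k_{j,t}]$. Write $x = v_{k_{j,t}}$. Conditioned on $k_{j,t}$, the quantities $\mu^{(1)}$ and $\mu^{(2)}$ are averages of disjoint (hence independent) batches of Rademacher random variables, each of mean $x$. Thus $\E[\mu^{(1)}\mid k_{j,t}] = \E[\mu^{(2)}\mid k_{j,t}] = x$ and $\mu^{(1)} \perp \mu^{(2)}$ given $k_{j,t}$, so
\[
\E[U_{j,t} \mid k_{j,t}] = \E[\mu^{(1)}\mid k_{j,t}]\,\E[\mu^{(2)}\mid k_{j,t}] = x^2 = v_{k_{j,t}}^2.
\]
Since $U_{j,t}$ and $Z_{j,t}$ depend on disjoint blocks of samples, they are conditionally independent given $k_{j,t}$, so
\[
\E[U_{j,t} Z_{j,t} \mid k_{j,t}] = \E[U_{j,t}\mid k_{j,t}]\cdot \E[Z_{j,t}\mid k_{j,t}] = v_{k_{j,t}}^2 \cdot z_{j,t},
\]
where the last equality uses the coupling defining $z_{j,t}$.

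Second, I would replace $U_{j,t}$ by its truncation $\widetilde U_{j,t}$ using \Cref{lem:truncation}. That lemma states that conditioned on $k_{j,t}$, the event $U_{j,t}Z_{j,t} \neq \widetilde U_{j,t} Z_{j,t}$ has probability $O(\alpha^2)$. Since $|\mu^{(1)}|,|\mu^{(2)}|\le 1$ we have $|U_{j,t}|\le 1$, and by construction $|\widetilde U_{j,t}|\le 1$ as well; combined with $Z_{j,t}\in\{0,1\}$, this gives the deterministic bound $|U_{j,t}Z_{j,t} - \widetilde U_{j,t} Z_{j,t}|\le 2$. Hence
\[
\bigl|\E[\widetilde U_{j,t} Z_{j,t}\mid k_{j,t}] - \E[U_{j,t} Z_{j,t}\mid k_{j,t}]\bigr|\le 2\cdot O(\alpha^2) = O(\alpha^2).
\]
Combining this with the previous display yields $\E[\hat r_{j,t}\mid k_{j,t}] = v_{k_{j,t}}^2 z_{j,t} + O(\alpha^2)$, as claimed.

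The argument is essentially bookkeeping; the only delicate point is the independence claim, which relies on the algorithm's deliberate partition of the $m_j$ samples into three disjoint blocks used by the two $\mu^{(\ell)}$ estimators and by \textsc{Level-Check}. Once independence is established, the rest (factoring the expectation and absorbing the truncation error) is routine.
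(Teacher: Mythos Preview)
Your proposal is correct and follows essentially the same approach as the paper: both arguments use the truncation lemma to bound $\bigl|\E[\widetilde U_{j,t}Z_{j,t}]-\E[U_{j,t}Z_{j,t}]\bigr|$ by $O(\alpha^2)$ via the uniform bound on $|\widetilde U Z - UZ|$, and then factor $\E[U_{j,t}Z_{j,t}\mid k_{j,t}]=v_{k_{j,t}}^2 z_{j,t}$ using the conditional independence of $\mu^{(1)},\mu^{(2)},Z_{j,t}$ coming from disjoint sample blocks. The only difference is the order in which these two steps are presented, which is cosmetic.
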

\begin{proof}
Fix $j,t$ and condition on $k_{j,t}$ for the entire proof of this lemma. For brevity we write $U \coloneq  U_{j,t}$, $Z \coloneq  Z_{j,t}$ and $\widetilde U \coloneq  \Ut_{j,t}$. By \cref{lem:truncation},
\[
\Pr[ \widetilde U Z \neq U Z] = O(\alpha^2),
\]
and $|\widetilde U Z|,|U Z|\le 1$, so
\[
\bigl| \E[\widetilde U Z] - \E[U Z] \bigr|
    \le 2\Pr[ \widetilde U Z \neq U Z]
    = O(\alpha^2).
\]
Now, note that $UZ = \mu^{(1)} \mu^{(2)} Z$, and $\mu^{(1)},\mu^{(2)}$, $Z$ are independent given $k_{j,t}$, since they are computed from disjoint samples. Since $\E[\mu^{(1)}] = \E[\mu^{(2)}] = v_{k_{j,t}}$, and $\E[Z] = z_{j,t}$, the conclusion follows.
\end{proof}

\begin{corollary} \label{lem:expectation-qhat_j}
For each $j$, we have
\[
\E[\rhat_j] = \E_{k \sim [N]}[ v_k^2 \ind{L(v_k) = j} ] + O(\alpha^2).
\]
\end{corollary}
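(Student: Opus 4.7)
The plan is to simply combine the previous lemma with linearity of expectation and the tower property, averaging over the uniform choice of index.

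First, I would recall that $\rhat_j = \frac{1}{T_j} \sum_{t=1}^{T_j} \rhat_{j,t}$, so by linearity of expectation
\[
\E[\rhat_j] = \frac{1}{T_j} \sum_{t=1}^{T_j} \E[\rhat_{j,t}],
\]
and all $T_j$ terms are identically distributed (each $k_{j,t}$ is a fresh uniform draw from $[N]$ with independent Rademacher samples), so it suffices to compute $\E[\rhat_{j,t}]$ for any single $t$.

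Next, I would apply the tower property, conditioning on $k_{j,t}$ and then invoking the preceding lemma:
\[
\E[\rhat_{j,t}] = \E_{k_{j,t}}\!\bigl[ \E[\rhat_{j,t} \mid k_{j,t}] \bigr] = \E_{k_{j,t}}\!\bigl[ v_{k_{j,t}}^2 \, z_{j,t} + O(\alpha^2) \bigr],
\]
where the $O(\alpha^2)$ slack comes with a uniform (deterministic) bound, so it passes through the outer expectation. Since $z_{j,t} = \Pr[L(v_{k_{j,t}}) = j \mid k_{j,t}]$ by the definition coupling $Z_{j,t}$ to $L(\cdot)$, and since $v_{k_{j,t}}^2$ is a function of $k_{j,t}$ alone, we can rewrite the product as a single expectation over the independent randomness of $L$:
\[
v_{k_{j,t}}^2 \, z_{j,t} = \E\!\bigl[ v_{k_{j,t}}^2 \, \ind{L(v_{k_{j,t}}) = j} \,\big|\, k_{j,t} \bigr].
\]

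Finally, using that $k_{j,t}$ is uniform on $[N]$ and absorbing the inner expectation by the tower property,
\[
\E[\rhat_{j,t}] = \E_{k \sim [N]}\!\bigl[ v_k^2 \, \ind{L(v_k) = j} \bigr] + O(\alpha^2),
\]
which is exactly the claim. There is no real obstacle here; the work was all done in the previous lemma, and this corollary is just a bookkeeping step that averages over the uniform index draw and re-expresses $z_{j,t}$ as the expectation of the indicator $\ind{L(v_k) = j}$.
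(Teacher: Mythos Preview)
Your proposal is correct and takes essentially the same approach as the paper. The paper's proof simply recalls that $\rhat_j = \frac{1}{T_j}\sum_t \rhat_{j,t}$ with each $k_{j,t}$ uniform in $[N]$ and says the result follows directly from the previous lemma; your write-up is a faithful unpacking of exactly that deduction via linearity and the tower property.
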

\begin{proof}
Recall $\rhat_j = \frac{1}{T_j} \sum_{t=1}^{T_j} \rhat_{j,t}$ and that each $k_{j,t}$ is independent and uniform in $[N]$. Thus, this follows directly from the previous lemma.
\end{proof}

Now, define $r_j = \E_{k \sim [N]}[v_k^2 \ind{L(v_k) = j}]$.

\begin{lemma}\label{lem:concentration-qhat_j}
For each $j \le J$, with probability $1 - O(\alpha^2)$ we have
\[
|\rhat_j - r_j| \le \Ot\big(\alpha \sqrt{r_j} + \alpha^2\big),
\]
where $r_j = \E_{k \sim [N]}[v_k^2 \ind{L(v_k)=j}]$.
\end{lemma}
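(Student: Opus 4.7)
The plan is to apply a Bernstein-type concentration inequality to $\sum_{t=1}^{T_j} \rhat_{j,t}$ and then combine it with the $O(\alpha^2)$ bias bound from \Cref{lem:expectation-qhat_j}. Set $B \coloneq 16\cdot 4^{-j}$. After the truncation step, $|\widetilde U_{j,t}| \le B$, so $|\rhat_{j,t}| = |\widetilde U_{j,t}|\,Z_{j,t} \le B$ almost surely; the $\rhat_{j,t}$'s are mutually independent across $t$ since they are built from disjoint sample blocks and fresh indices $k_{j,t}$. Writing $\mu \coloneq \E[\rhat_{j,t}]$, the preceding corollary gives $|\mu - r_j| = O(\alpha^2)$, so it suffices to concentrate the empirical mean $\rhat_j$ around $\mu$.

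The core estimate is the second-moment bound $\E[\rhat_{j,t}^2] \le O(4^{-j}\,r_j + 16^{-j}\,\alpha^2)$. To derive it I would first use $\rhat_{j,t}^2 \le B\,|\rhat_{j,t}|$ and then bound $\E[|\widetilde U_{j,t}|\,Z_{j,t}]$. Conditional on $k_{j,t} = k$, the estimator $\widetilde U_{j,t}$ and the indicator $Z_{j,t}$ are independent because they are computed from disjoint sample blocks, so $\E[|\widetilde U_{j,t}|\,Z_{j,t} \mid k] = z_k \cdot \E[|\widetilde U_{j,t}| \mid k]$. On the good event of \Cref{lem:U-concentration}, $|\mu^{(\ell)} - v_k| \le 0.05\cdot 2^{-j}$, so whenever $|v_k|$ is not too large, $|\mu^{(1)}\mu^{(2)}|$ is bounded by $v_k^2 + O(4^{-j})$ and lies inside the truncation window, giving $|\widetilde U_{j,t}| \le v_k^2 + O(4^{-j})$; otherwise $|\widetilde U_{j,t}| \le B$ trivially. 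Splitting $k$ by whether $|v_k|$ lies in the band $[0.9\cdot 2^{-j},\,2.2\cdot 2^{-j}]$, the in-band indices satisfy $v_k^2 \ge 0.81\cdot 4^{-j}$ and thus $\E_k[z_k\,\ind{k \in \text{band}}] \le r_j/(0.81\cdot 4^{-j})$, while for out-of-band $k$ the level-concentration lemma forces $z_k = O(\alpha^2)$. Assembling these, $\E_k[z_k] \le O(4^j r_j + \alpha^2)$ and $\E[|\rhat_{j,t}|] = O(r_j + 4^{-j}\alpha^2)$, and multiplying by $B$ yields the claimed second moment.

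Armed with the variance estimate, I would apply Bernstein's inequality for centered, bounded, independent variables (the form derived inside the proof of \Cref{lem:chernoff-01}) to $\sum_t(\rhat_{j,t} - \mu)$, which is valued in $[-2B, 2B]$ with $V \coloneq \sum_t \E[\rhat_{j,t}^2] \le T_j \cdot O(4^{-j} r_j + 16^{-j}\alpha^2)$. Taking the deviation parameter of order $\log(1/\alpha)$ yields, except with probability $O(\alpha^2)$,
\[
\Bigl|\sum_t \rhat_{j,t} - T_j\mu\Bigr| \lesssim \sqrt{V\log(1/\alpha)} + B\log(1/\alpha).
\]
Plugging in $T_j = \alpha^{-2}/4^j$ and dividing by $T_j$, the $\sqrt{V\log(1/\alpha)}/T_j$ term becomes $\Ot(\alpha\sqrt{r_j} + \alpha^2)$ (the $4^{-j}$ factors in $V$ cancel against the $4^j$ from $T_j$), while $B\log(1/\alpha)/T_j = \Ot(\alpha^2)$. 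Combining with $|\mu - r_j| = O(\alpha^2)$ from \Cref{lem:expectation-qhat_j} yields the stated $|\rhat_j - r_j| \le \Ot(\alpha\sqrt{r_j} + \alpha^2)$.

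I expect the main obstacle to be the variance bound: the naive estimate $\E[\rhat_{j,t}^2] \le B^2\,\E[Z_{j,t}]$ is too weak whenever many indices fall into level $j$, so the second moment must be shown to scale linearly in $r_j$ rather than in $\Pr[z_k > 0]$. Achieving this requires simultaneously exploiting the approximation $|\mu^{(1)}\mu^{(2)}| \approx v_k^2$ from \Cref{lem:U-concentration} and the fact that $z_k$ vanishes up to $O(\alpha^2)$ outside the band $[0.9,2.2]\cdot 2^{-j}$ from \Cref{lem:level-concentration}, which together let the apparent excess $O(4^{-j})\E_k[z_k]$ slack be absorbed into $r_j$.
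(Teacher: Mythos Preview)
Your proposal is correct, but the paper's proof is considerably shorter. The paper observes that the $\rhat_{j,t}$ are nonnegative and bounded by $B_j = 16\cdot 4^{-j}$, then applies \cref{lem:chernoff-01} (Bernstein for $[0,B]$ variables) directly; since that lemma already encodes the estimate $V \le B\mu$, the second-moment bound comes for free, and substituting $B_j/T_j = 16\alpha^2$ together with $\mu_j = r_j + O(\alpha^2)$ from \cref{lem:expectation-qhat_j} immediately yields the conclusion. By contrast, you re-invoke \cref{lem:level-concentration} to show $\E_k[z_k] = O(4^j r_j + \alpha^2)$ and thereby control $\E[|\rhat_{j,t}|]$ by hand. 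This works and is arguably more robust --- it does not rely on $\rhat_{j,t}\ge 0$, which is somewhat delicate given that the algorithm's truncation $\widetilde U_{j,t}=\min(U_{j,t},B_j)$ caps only from above --- but the paper's shortcut dissolves the ``main obstacle'' you identify: once nonnegativity is granted, $\E[\rhat_{j,t}^2]\le B_j\,\E[\rhat_{j,t}] = B_j\mu_j \approx B_j r_j$ automatically, and no separate level-structure argument is needed.
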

\begin{proof}
Fix $j$.  Note that the random variables $\rhat_{j,1},\dots,\rhat_{j,T_j}$ are independent, nonnegative, and bounded:
\[
0 \le \rhat_{j,t} \le B_j \coloneqq 16 \cdot 4^{-j}.
\]
Recall that
\[
\rhat_j = \frac{1}{T_j} \sum_{t=1}^{T_j} \rhat_{j,t},
\]
and let $\mu_j = \E[\rhat_j]$. Note that $\mu_j = r_j + O(\alpha^2)$ by \cref{lem:expectation-qhat_j}.

By a Chernoff bound (\cref{lem:chernoff-01}) on $T_j \rhat_j = \sum \rhat_{j,t}$, with $B = B_j$ and $\gamma = \Theta(\log(1/\alpha))$, we have with probability at least $1 - O(\al^2)$ that
\[
|\rhat_j - \mu_j| \le \f{\gamma}{T_j}\bigl(
  \sqrt{B_j T_j \mu_j}
  + B_j
\bigr).
\]
We have $B_j / T_j = 16 \alpha^2$, so substituting this into the above expression gives
\[
|\rhat_j - \mu_j| \le \gamma(4\alpha \sqrt{\mu_j} + 16\alpha^2) = \Ot(\alpha \sqrt{\mu_j} + \alpha^2).
\]
By \cref{lem:expectation-qhat_j}, we have $\mu_j = r_j + O(\alpha^2)$, so
\[
\alpha \sqrt{\mu_j}
  = \alpha \sqrt{r_j + O(\alpha^2)}
  = \Ot\big(\alpha \sqrt{r_j} + \alpha^2\big),
\]
and
\[
|\mu_j - r_j| = O(\alpha^2).
\]
Combining these bounds, we get
\[
|\rhat_j - r_j|
  \le \Ot(\alpha \sqrt{r_j} + \alpha^2),
\]
except with probability $O(\alpha^2)$, as desired.
\end{proof}

Now, write
\[
r \coloneq  \E_{k \sim [N]}[v_k^2],.
\]
We wish to show that $|\qhat - \sqrt r| = \Ot(\al)$ with probability at least $2/3$.
Recall that we had earlier defined
\[
r_j = \E_{k \sim [N]}[v_k^2 \ind{L(v_k)=j}].
\]
Thus we have
\[
r = \sum_{j=0}^{J} r_j + r_{J+1}.
\]
By the first lemma, whenever $L(v_k)=J+1$ we have $|v_k| = O(\alpha)$ except with probability $O(\alpha^2)$ over the randomness defining $L$, so
\[
r_{J+1} = O(\alpha^2).
\]
By \Cref{lem:concentration-qhat_j} and a union bound over $j$, with probability $1-\Ot(\alpha^2)$ we have
\[
|\rhat_j - r_j| \le \Ot\big(\alpha \sqrt{r_j} + \alpha^2\big),
\quad\text{for all $0 \le j \le J$}.
\]
Assuming (for the rest of the proof) that this holds, we have
\begin{align*}
|\qhat^2 - r|
    &= \left|\sum_{j=0}^{J} \rhat_j - \sum_{j=0}^{J} r_j - r_{J+1} \right| 
    \le \sum_{j=0}^{J} \Ot\big(\alpha \sqrt{r_j} + \alpha^2\big) + O(\alpha^2) 
    = \Ot\big(\alpha \sqrt{r} + \alpha^2\big),
\end{align*}
using $\sum_j r_j \le r$ and $J = \Ot(1)$.

Finally, we bound the error in $\qhat$.  If $r \ge c\alpha^2$ for a suitable constant $c>0$, then
\[
|\qhat - \sqrt{r}|
    = \frac{|\qhat^2 - r|}{\qhat + \sqrt{r}}
    \le \frac{\Ot(\alpha \sqrt{r} + \alpha^2)}{\sqrt{r}}
    = \Ot(\alpha).
\]
If instead $r \le c\alpha^2$, then we also have $|\qhat^2| \le r + |\qhat^2 - r| = O(\alpha^2)$, so
\[
|\qhat - \sqrt{r}| \le \qhat + \sqrt{r} = O(\alpha).
\]
Thus in all cases
\[
|\qhat - \sqrt{r}| = \Ot(\alpha),
\]
and for small enough absolute constants in the algorithm the overall success probability is at least $2/3$, completing the proof of \Cref{thm:rademacher-norm} (noting that we can reduce $\al$ by a logarithmic factor to make the approximation error actually $\al$).


 \section*{Acknowledgements}
We thank Steve Flammia, John Wright, Allen Liu, and Ryan O'Donnell for helpful discussions and ChatGPT for general assistance.

\bibliography{references}

@article{gupta2025few,
  title={Few Single-Qubit Measurements Suffice to Certify Any Quantum State},
  author={Gupta, Meghal and He, William and O'Donnell, Ryan},
  journal={arXiv preprint arXiv:2506.11355},
  year={2025}
}

@article{flammia2011direct,
  title={Direct fidelity estimation from few Pauli measurements},
  author={Flammia, Steven T. and Liu, Yi-Kai},
  journal={Physical review letters},
  volume={106},
  number={23},
  pages={230501},
  year={2011},
  publisher={APS},
doi={10.1103/PhysRevLett.106.230501}
}

@article{scharnhorst2025optimal,
  title={Optimal lower bounds for quantum state tomography},
  author={Scharnhorst, Thilo and Spilecki, Jack and Wright, John},
  journal={arXiv preprint arXiv:2510.07699},
  year={2025}
}

@article{hayashi1998asymptotic,
  title={Asymptotic estimation theory for a finite-dimensional pure state model},
  author={Hayashi, Masahito},
  journal={Journal of Physics A: Mathematical and General},
  volume={31},
  number={20},
  pages={4633},
  year={1998},
  publisher={IOP Publishing}
}

@article{acharya2025paulinear,
  title={Pauli Measurements Are Near-Optimal for Single-Qubit Tomography},
  author={Acharya, Jayadev and Dharmavarapu, Abhilash and Liu, Yuhan and Yu, Nengkun},
  journal={arXiv preprint arXiv:2507.22001},
  year={2025}
}

@article{huang2025certifying,
  title={Certifying almost all quantum states with few single-qubit measurements},
  author={Huang, Hsin-Yuan and Preskill, John and Soleimanifar, Mehdi},
  journal={Nature Physics},
  pages={1--8},
  year={2025},
  publisher={Nature Publishing Group UK London}
}

@article{walker1977efficient,
  title={An efficient method for generating discrete random variables with general distributions},
  author={Walker, Alastair J},
  journal={ACM Transactions on Mathematical Software (TOMS)},
  volume={3},
  number={3},
  pages={253--256},
  year={1977},
  publisher={ACM New York, NY, USA}
}

@inproceedings{acharya2025pauli,
  title={Pauli measurements are not optimal for single-copy tomography},
  author={Acharya, Jayadev and Dharmavarapu, Abhilash and Liu, Yuhan and Yu, Nengkun},
  booktitle={Proceedings of the 57th Annual ACM Symposium on Theory of Computing},
  pages={718--729},
  year={2025}
}

@article{guctua2020fast,
  title={Fast state tomography with optimal error bounds},
  author={Gu{\c{t}}{\u{a}}, Madalin and Kahn, Jonas and Kueng, Richard and Tropp, Joel A},
  journal={Journal of Physics A: Mathematical and Theoretical},
  volume={53},
  number={20},
  pages={204001},
  year={2020},
  publisher={IOP Publishing}
}

@article{pelecanos2025mixed,
  title={Mixed state tomography reduces to pure state tomography},
  author={Pelecanos, Angelos and Spilecki, Jack and Tang, Ewin and Wright, John},
  journal={arXiv preprint arXiv:2511.15806},
  year={2025}
}

@inproceedings{o2016efficient,
  title={Efficient quantum tomography},
  author={O'Donnell, Ryan and Wright, John},
  booktitle={Proceedings of the forty-eighth annual ACM symposium on Theory of Computing},
  pages={899--912},
  year={2016}
}

@inproceedings{haah2016sample,
  title={Sample-optimal tomography of quantum states},
  author={Haah, Jeongwan and Harrow, Aram W and Ji, Zhengfeng and Wu, Xiaodi and Yu, Nengkun},
  booktitle={Proceedings of the forty-eighth annual ACM symposium on Theory of Computing},
  pages={913--925},
  year={2016}
}

@article{gross2010compressed,
  title        = {Quantum state tomography via compressed sensing},
  author       = {Gross, David and Liu, Yi-Kai and Flammia, Steven T. and Becker, Stephen and Eisert, Jens},
  journal      = {Phys. Rev. Lett.},
  volume       = {105},
  number       = {15},
  pages        = {150401},
  year         = {2010},
  doi          = {10.1103/PhysRevLett.105.150401},
  eprint       = {0909.3304},
  archivePrefix= {arXiv},
  primaryClass = {quant-ph}
}

@article{liu2011universal,
  title        = {Universal low-rank matrix recovery from Pauli measurements},
  author       = {Liu, Yi-Kai},
  journal      = {arXiv preprint arXiv:1103.2816},
  year         = {2011},
  eprint       = {1103.2816},
  archivePrefix= {arXiv},
  primaryClass = {quant-ph}
}

@article{flammia2012compressed,
  title        = {Quantum Tomography via Compressed Sensing: Error Bounds, Sample Complexity, and Efficient Estimators},
  author       = {Flammia, Steven T. and Gross, David and Liu, Yi-Kai and Eisert, Jens},
  journal      = {New J. Phys.},
  volume       = {14},
  pages        = {095022},
  year         = {2012},
  doi          = {10.1088/1367-2630/14/9/095022},
  eprint       = {1205.2300},
  archivePrefix= {arXiv},
  primaryClass = {quant-ph}
}

@article{yu2020sample,
  title={Sample efficient tomography via Pauli Measurements},
  author={Yu, Nengkun},
  journal={arXiv preprint arXiv:2009.04610},
  year={2020}
}

@misc{voroninski2013quantumtomography,
      title={{Quantum Tomography From Few Full-Rank Observables}}, 
      author={Vladislav Voroninski},
      year={2013},
      eprint={1309.7669},
      archivePrefix={arXiv},
      primaryClass={math-ph},
      url={https://arxiv.org/abs/1309.7669}, 
}

@article{KUENG201788,
title = {Low rank matrix recovery from rank one measurements},
journal = {Applied and Computational Harmonic Analysis},
volume = {42},
number = {1},
pages = {88-116},
year = {2017},
issn = {1063-5203},
doi = {https://doi.org/10.1016/j.acha.2015.07.007},
author = {Richard Kueng and Holger Rauhut and Ulrich Terstiege},
}

@inproceedings{van2023quantum,
  title={Quantum tomography using state-preparation unitaries},
  author={van Apeldoorn, Joran and Cornelissen, Arjan and Gily{\'e}n, Andr{\'a}s and Nannicini, Giacomo},
booktitle = {Proceedings of the 2023 Annual ACM-SIAM Symposium on Discrete Algorithms (SODA)},
  pages={1265--1318},
doi = {10.1137/1.9781611977554.ch47},
  year={2023},
}

@article{lowe2022lower,
  title={Lower bounds for learning quantum states with single-copy measurements},
  author={Lowe, Angus and Nayak, Ashwin},
  journal={arXiv preprint arXiv:2207.14438},
  year={2022}
}

@article{pelecanos2025debiased,
  title={The debiased Keyl's algorithm: a new unbiased estimator for full state tomography},
  author={Pelecanos, Angelos and Spilecki, Jack and Wright, John},
  journal={arXiv preprint arXiv:2510.07788},
  year={2025}
}

@article{Grewal2025efficientlearningof,
  doi = {10.22331/q-2025-11-06-1907},
  title = {Efficient {L}earning of {Q}uantum {S}tates {P}repared {W}ith {F}ew {N}on-{C}lifford {G}ates},
  author = {Grewal, Sabee and Iyer, Vishnu and Kretschmer, William and Liang, Daniel},
  journal = {{Quantum}},
  volume = {9},
  pages = {1907},
  year = {2025}
}
\bibliographystyle{alpha}
\appendix\newcommand{\osc}{\mathrm{osc}}
\section{Deferred Proofs}\label{sec:deferred}

\noisysearch*
\begin{proof}
Write $m_I:=\inf_{t\in I} g(t)$.
Since $g$ is continuous on the compact interval $I$, the infimum is attained.
We use the following elementary consequences of unimodality: 
Let $J=[u,u+w]\subseteq I$, and define
\[
q_1=u+\frac{w}{4},\qquad q_2=u+\frac{w}{2},\qquad q_3=u+\frac{3w}{4}.
\]
Then:
\begin{align*}
g(q_1)<g(q_2) &\implies \text{some minimizer of $g|_J$ lies in }[u,q_2],\\
g(q_3)<g(q_2) &\implies \text{some minimizer of $g|_J$ lies in }[q_2,u+w],\\
g(q_1)>g(q_2) &\implies \text{some minimizer of $g|_J$ lies in }[q_1,u+w],\\
g(q_3)>g(q_2) &\implies \text{some minimizer of $g|_J$ lies in }[u,q_3],\\
g(q_1)\ge g(q_2)\le g(q_3) &\implies \text{some minimizer of $g|_J$ lies in }[q_1,q_3].
\end{align*}
We now describe the algorithm.
Starting from $J_0:=I$, at stage $k$, we have a current interval
\[
J_k=[u_k,u_k+w_k]
\]
which, inductively, contains at least one global minimizer of $g$ on $I$.
If $\abs{J_k}\le \eta$, terminate and output any point of $J_k$.
Otherwise, define the quarter-points
\[
q_{k,1}=u_k+\frac{w_k}{4},\qquad q_{k,2}=u_k+\frac{w_k}{2},\qquad q_{k,3}=u_k+\frac{3w_k}{4}.
\]
Query the oracle at these three points.
We say the stage is \emph{decisive} if at least one of the following holds:
\begin{align}
\widetilde g(q_{k,1}) &\le \widetilde g(q_{k,2})-2\eta,\label{eq:dec1}\\
\widetilde g(q_{k,3}) &\le \widetilde g(q_{k,2})-2\eta,\label{eq:dec2}\\
\widetilde g(q_{k,1}) &\ge \widetilde g(q_{k,2})+2\eta,\label{eq:dec3}\\
\widetilde g(q_{k,3}) &\ge \widetilde g(q_{k,2})+2\eta.\label{eq:dec4}
\end{align}
If none of \eqref{eq:dec1}--\eqref{eq:dec4} holds, we stop and output the point among
\[
\{q_{k,1},q_{k,2},q_{k,3}\}
\]
with smallest $\widetilde g$-value.
If the stage is decisive, then by the oracle guarantee
$|\widetilde g(t)-g(t)|\le \eta$
we obtain the corresponding exact comparisons for $g$. For example,
\[
\widetilde g(q_{k,1})\le \widetilde g(q_{k,2})-2\eta
\implies
g(q_{k,1})\le g(q_{k,2}),
\]
and similarly for the other three comparisons.

We then update the current interval as follows:
\begin{itemize}
\item if \eqref{eq:dec1} holds, replace $J_k$ by $[u_k,q_{k,2}]$;
\item else if \eqref{eq:dec2} holds, replace $J_k$ by $[q_{k,2},u_k+w_k]$;
\item else if both \eqref{eq:dec3} and \eqref{eq:dec4} hold, replace $J_k$ by $[q_{k,1},q_{k,3}]$;
\item else if only \eqref{eq:dec3} holds, replace $J_k$ by $[q_{k,1},u_k+w_k]$;
\item else if only \eqref{eq:dec4} holds, replace $J_k$ by $[u_k,q_{k,3}]$.
\end{itemize}
The new interval still contains a minimizer of $g|_{J_k}$, hence in particular still contains a global minimizer of $g$ on $I$.
Moreover, every replacement interval has length at most $\frac{3}{4}|J_k|$.

It remains to justify that if none of these decisive steps occur, then we can stop. We also stop if $|J_k|\leq \eta$, since then the approximate optimality of any point in $J_k$ follows from $g$ being $10$-Lipschitz.
Suppose none of \eqref{eq:dec1}--\eqref{eq:dec4} holds at stage $k$.
Then
\[
|\widetilde g(q_{k,1})-\widetilde g(q_{k,2})|<2\eta,
\qquad
|\widetilde g(q_{k,3})-\widetilde g(q_{k,2})|<2\eta.
\]
Using the oracle error, we get
\[
|g(q_{k,1})-g(q_{k,2})|<4\eta,
\qquad
|g(q_{k,3})-g(q_{k,2})|<4\eta.
\]
Hence
$\operatorname{osc}_{\{q_{k,1},q_{k,2},q_{k,3}\}}(g)<8\eta$.
Since $g$ is $10$-regularized on $J_k$,
\[
\operatorname{osc}_{J_k}(g)
\le
10\cdot \operatorname{osc}_{\{q_{k,1},q_{k,2},q_{k,3}\}}(g)
<
80\eta.
\]
Because $J_k$ contains a global minimizer of $g$ on $I$, we have
$\inf_{t\in J_k} g(t)=m_I$.
Therefore every point of $J_k$ satisfies
\[
g(t)\le m_I+80\eta.
\]
In particular, if $\hat t$ is the sampled point among $\{q_{k,1},q_{k,2},q_{k,3}\}$ with smallest $\widetilde g$-value, then
\[
g(\hat t)\le m_I+80\eta.
\]
So the stopping rule is correct, with $C=80$.

Let $t_\star$ be any global minimizer of $g$ on $I$ that remains inside every current interval; such a minimizer exists by construction.
Because every update keeps $t_\star$ inside the current interval and shrinks the interval length by a factor at most $\frac{3}{4}$, after $N$ decisive stages we have an interval $J_N$ containing $t_\star$ with
\[
|J_N|\le \left(\frac{3}{4}\right)^N |I|.
\]
Thus the query complexity is controlled by how many multiplicative interval shrinks are needed to reach a region where the oscillation of $g$ is $O(\eta)$ around a minimizer. This takes $O(\log(1/\eta))$ decisive stages, since the original interval had width at most $10$.
Each stage uses $3$ oracle queries, so the total number of queries is also as above. 
Together with the stopping-rule guarantee, this proves the claim.
\end{proof}

We will use the following lemma to prove \cref{fact:g-regularized} and \cref{fact:f-regularized}.

\begin{lemma}\label{lem:cosine-shape}
Let $H(t)\coloneq \sqrt{A-B\cos(t-t_0)}$, where $A\ge B\ge 0$ and $B\le 1$.
Then:
\begin{enumerate}
    \item $H$ is $1$-Lipschitz on $\mathbb R$;
    \item $H$ is unimodal on every interval of length at most $\pi$;
    \item $H$ is $16$-regularized on every interval of length at most $\pi$.
\end{enumerate}
\end{lemma}

\begin{proof}
By translation invariance, we may assume $t_0=0$, so $H(t)=\sqrt{A-B\cos t}$.

\smallskip
\noindent\textbf{1. Lipschitz bound.}
Write $A-B\cos t=(A-B)+B(1-\cos t)=(A-B)+2B\sin^2(t/2)$.
At every point where $H(t)\neq 0$, the function $H$ is differentiable and
\[
H'(t)=\frac{B\sin t}{2H(t)}=\frac{B\sin(t/2)\cos(t/2)}{H(t)}.
\]
Since $A\ge B$, we have $H(t)\ge \sqrt{2B}\,|\sin(t/2)|$. Therefore
\[
|H'(t)| \le \frac{B|\sin(t/2)\cos(t/2)|}{\sqrt{2B}\,|\sin(t/2)|} = \sqrt{\frac B2}\,|\cos(t/2)| \le 1,
\]
using $B\le 1$. Thus $|H'|\le 1$ wherever $H$ is differentiable. Since $H$ is continuous on $\mathbb R$, it follows that $H$ is $1$-Lipschitz.

\smallskip
\noindent\textbf{2. Unimodality.}
If $B=0$, then $H$ is constant, hence unimodal. Assume $B>0$. Since $H$ is even and $2\pi$-periodic, and since $\cos t$ is strictly decreasing on $[0,\pi]$, the function $H$ is strictly increasing on $[0,\pi]$. Hence the only local extrema of $H$ occur at multiples of $\pi$, and any interval of length at most $\pi$ contains at most one such extremum in its interior. Therefore $H$ is unimodal on every interval of length at most $\pi$.

\smallskip
\noindent\textbf{3. Regularization.}
We first note a general fact: if $g\ge 0$ is $c$-regularized on an interval $I$, then $\sqrt g$ is $2c$-regularized on $I$.
Indeed, let $Q$ denote the set of quarter-points of $I$, and write $M_S\coloneq \max_S g$ and $m_S\coloneq \min_S g$ for $S\in\{I,Q\}$. If $M_I=0$, then $g\equiv 0$ on $I$ and the claim is trivial. Otherwise,
\[
\osc_I(\sqrt g)
= \sqrt{M_I}-\sqrt{m_I}
= \frac{M_I-m_I}{\sqrt{M_I}+\sqrt{m_I}}.
\]
Since $g$ is $c$-regularized, $M_I-m_I\le c(M_Q-m_Q)$, and hence
\[
\osc_I(\sqrt g)
\le c \cdot\frac{M_Q-m_Q}{\sqrt{M_I}+\sqrt{m_I}}
= c\cdot\osc_Q(\sqrt g)\cdot \frac{\sqrt{M_Q}+\sqrt{m_Q}}{\sqrt{M_I}+\sqrt{m_I}}.
\]
Because $Q\subseteq I$, we have $M_Q\le M_I$, so $\sqrt{M_Q}+\sqrt{m_Q}\le 2\sqrt{M_I}$, while $\sqrt{M_I}+\sqrt{m_I}\ge \sqrt{M_I}$. Therefore $\osc_I(\sqrt g)\le 2c \cdot \osc_Q(\sqrt g)$.

Now apply this to $g(t)\coloneq A-B\cos t$. If $B=0$, then $H$ is constant and there is nothing to prove. Assume $B>0$. For every set $S$, we have $\osc_S(g)=B\,\osc_S(\cos)$, so $g$ has exactly the same regularization constant as $\cos$. Thus it suffices to show that $\cos t$ is $8$-regularized on every interval of length at most $\pi$.

Let $I=[u,u+w]$ with $0<w\le \pi$, and let $q_1=u+w/4$, $q_2=u+w/2$, and $q_3=u+3w/4$.

\smallskip
\noindent\emph{Case 1: $I$ contains no multiple of $\pi$.}
Then $\cos t$ is monotone on $I$, so $\osc_I(\cos)=|\cos(u+w)-\cos u|$ and $\osc_{\{q_1,q_2,q_3\}}(\cos)=|\cos(u+3w/4)-\cos(u+w/4)|$. Using $\cos a-\cos b=2\sin((a+b)/2)\sin((b-a)/2)$, we get
\[
\frac{\osc_I(\cos)}{\osc_{\{q_1,q_2,q_3\}}(\cos)}
= \frac{2|\sin(u+w/2)\sin(w/2)|}{2|\sin(u+w/2)\sin(w/4)|}
= \frac{\sin(w/2)}{\sin(w/4)}
= 2\cos(w/4) \le 2.
\]

\smallskip
\noindent\emph{Case 2: $I$ contains a multiple of $\pi$.}
By reflection symmetry and invariance under shifting by $\pi$, we may assume that $I$ contains $0$, the maximum point of $\cos$. Write $I=[-\alpha,\beta]$, where $\alpha,\beta\ge 0$ and $\alpha+\beta\le \pi$, and by symmetry assume $\beta\ge \alpha$. Then $\osc_I(\cos)=1-\cos\beta$.

The quarter-points are $q_1=(\beta-3\alpha)/4$, $q_2=(\beta-\alpha)/2$, and $q_3=(3\beta-\alpha)/4$. Since $\beta\ge \alpha$, we have $q_3\ge \beta/2$. Also, one of $q_1,q_2$ has absolute value at most $\beta/4$: if $\alpha\ge \beta/2$, then $0\le q_2=(\beta-\alpha)/2\le \beta/4$, while if $\alpha<\beta/2$, then $-\beta/8<q_1=(\beta-3\alpha)/4\le \beta/4$. Call such a point $q_\ast$.

Since $\cos$ is even and decreasing on $[0,\pi]$, we have $\min_{t \in \{q_1,q_2,q_3\}}\cos(t) \le \cos(q_3)\le \cos(\beta/2)$ and $\max_{t \in \{q_1,q_2,q_3\}}\cos(t) \ge \cos(q_\ast)\ge \cos(\beta/4)$. Hence
\[
\osc_{\{q_1,q_2,q_3\}}(\cos)\ge \cos(\beta/4)-\cos(\beta/2).
\]
Let $x\coloneq \beta/4\in[0,\pi/4]$ and $c\coloneq \cos x\in[1/\sqrt2,1]$. Then
\[
1-\cos 4x = 2\sin^2 2x = 8\sin^2x\cos^2x = 8c^2(1-c^2),
\]
while
\[
\cos x-\cos 2x = c-(2c^2-1)=(1-c)(2c+1).
\]
Therefore
\[
\frac{\osc_I(\cos)}{\osc_{\{q_1,q_2,q_3\}}(\cos)} \le \frac{1-\cos 4x}{\cos x-\cos 2x}
= \frac{8c^2(1-c^2)}{(1-c)(2c+1)} = \frac{8c^2(1+c)}{2c+1}.
\]
Since $c\le 1$, we have $c^2(1+c)=c^2+c^3\le c+c=2c<2c+1$, so the last quantity is \(<8\). Thus $\cos$ is $8$-regularized on $I$.

Combining the two cases, $\cos$ is $8$-regularized on every interval of length at most $\pi$. Hence $g$ is $8$-regularized, and therefore $H=\sqrt g$ is $16$-regularized.
\end{proof}

\gregularized*
\begin{proof}
Fix $\theta\in[0,\pi/2]$ and write $g(\gamma)\coloneq g_\theta(\gamma)=\Delta(\theta,\gamma)$. From \eqref{eq:delta-squared-basic}, we have
\[
\frac12 g(\gamma)^2
= 1-|c_0|^2\cos^2\theta-|c_1|^2\sin^2\theta -2|c_0||c_1|\sin\theta\cos\theta\cos(\gamma-\gamma^\star),
\]
where $\gamma^\star=-\arg(c_0\overline{c_1})$. Equivalently, $g(\gamma)=\sqrt{A-B\cos(\gamma-\gamma^\star)}$, with
\[
A\coloneq 2-2|c_0|^2\cos^2\theta-2|c_1|^2\sin^2\theta,
\qquad B\coloneq 4|c_0||c_1|\sin\theta\cos\theta.
\]
Moreover,
\[
A-B = 2-2\bigl(|c_0|\cos\theta+|c_1|\sin\theta\bigr)^2 \ge 0,
\]
and
\[
B=(2|c_0||c_1|)(2\sin\theta\cos\theta)\le 1,
\]
since $2|c_0||c_1|\le |c_0|^2+|c_1|^2\le 1$ and $2\sin\theta\cos\theta\le 1$. Thus \Cref{lem:cosine-shape} applies. Therefore, $g_\theta$ is unimodal on every interval of length at most $\pi$, is $1$-Lipschitz, and is $16$-regularized.
\end{proof}

\fregularized*
\begin{proof}
From \eqref{eq:f-squared-second}, we have $\frac12 f(\theta)^2 = 1-s^2\cos^2(\theta-\theta^\star)$. Using $\cos^2 x=\frac12(1+\cos 2x)$, this becomes
\[
f(\theta)^2 = 2-s^2-s^2\cos\bigl(2(\theta-\theta^\star)\bigr).
\]
Define $H(t)\coloneq \sqrt{2-s^2-s^2\cos(t-2\theta^\star)}$. Then $f(\theta)=H(2\theta)$. Here $A\coloneq 2-s^2$ and $B\coloneq s^2$, so $A\ge B\ge 0$ and $B\le 1$. Thus \Cref{lem:cosine-shape} applies to $H$.

Therefore $H$ is unimodal and $16$-regularized on every interval of length at most $\pi$, and is $1$-Lipschitz. Let $I=[u,u+w]\subseteq[0,\pi/2]$, and let $q_1=u+w/4$, $q_2=u+w/2$, and $q_3=u+3w/4$. Then $2I=[2u,2u+2w]$ has length at most $\pi$, and its quarter-points are precisely $2q_1,2q_2,2q_3$. Hence
\[
\osc_I(f)
= \osc_{2I}(H) \le 16\cdot\osc_{\{2q_1,2q_2,2q_3\}}(H)= 16\cdot\osc_{\{q_1,q_2,q_3\}}(f).
\]
So $f$ is $16$-regularized on $[0,\pi/2]$.

Since $H$ is unimodal on every interval of length at most $\pi$, the function $f(\theta)=H(2\theta)$ is unimodal on $[0,\pi/2]$. Finally,
\[
|f(\theta)-f(\theta')|
= |H(2\theta)-H(2\theta')| \le |2\theta-2\theta'| = 2|\theta-\theta'|,
\]
so $f$ is $2$-Lipschitz.
\end{proof}

\end{document}